\documentclass[11pt]{article}

\usepackage[margin=1in]{geometry}

\usepackage{verbatim}

\usepackage{rotating}
\usepackage{makecell}
\usepackage{natbib}
\usepackage{booktabs}
\usepackage{multirow}

\usepackage{tcolorbox}

\newcolumntype{C}[1]{>{\centering\arraybackslash}p{#1}}

\usepackage[multiple]{footmisc}

\usepackage{float}
\usepackage{comment}
\usepackage{soul}
\usepackage{wrapfig}

\usepackage{amsmath,amsthm,amssymb,mathtools,bbm}

\usepackage{graphicx}
\usepackage{dsfont}
\usepackage{tikz}
\usetikzlibrary{calc}
\usetikzlibrary{shapes.geometric}
\usepackage{paralist}

\usepackage{wrapfig}

\usepackage{float}
\usepackage{comment}
\usepackage{algorithm}
\usepackage[noend]{algpseudocode}
\usepackage{algpseudocodex}

\usepackage{thmtools}
\usepackage{thm-restate}

\usepackage{caption}
\usepackage[labelformat=simple]{subcaption}

\usepackage{aliascnt}

\usepackage{xcolor}

\usepackage[backref=page]{hyperref}
\hypersetup{colorlinks=true,citecolor=blue,linkcolor=blue}
\hypersetup{colorlinks=true}
\hypersetup{linkcolor=[rgb]{.7,0,0}}
\hypersetup{citecolor=[rgb]{0,.7,0}}
\hypersetup{urlcolor=[rgb]{.7,0,.7}}

\declaretheorem[numberwithin=section]{theorem}
\declaretheorem[sibling=theorem, style=definition]{definition}
\declaretheorem[sibling=theorem]{lemma}

\declaretheorem[sibling=theorem, style=definition]{remark}
\declaretheorem[sibling=theorem]{proposition}

\makeatletter
\patchcmd{\ALG@step}{\addtocounter{ALG@line}{1}}{\refstepcounter{ALG@line}}{}{}
\newcommand{\ALG@lineautorefname}{Line}
\makeatother

\newcommand{\R}{\mathbb{R}}

\DeclareMathOperator*{\supp}{supp}

\renewcommand{\P}[1]{\mathbb{P}\left[{#1}\right]}
\newcommand{\E}[1]{\mathbb{E}\left[{#1}\right]}

\makeatletter
\newcommand\Ps@textstyle[2]{\mathbb{P}_{#1}\left[{#2}\right]}
\newcommand\Es@textstyle[2]{\mathbb{E}_{#1}\left[{#2}\right]}
\newcommand\Ps[2]{%
  \mathchoice %
  {\underset{{#1}}{\mathbb{P}}\left[{#2}\right]}
  {\Ps@textstyle{#1}{#2}}
  {\Ps@textstyle{#1}{#2}}
  {\Ps@textstyle{#1}{#2}}
}
\newcommand\Es[2]{%
  \mathchoice %
  {\underset{{#1}}{\mathbb{E}}\left[\vphantom{\big|}{#2}\right]}
  {\Es@textstyle{#1}{#2}}{\Es@textstyle{#1}{#2}}{\Es@textstyle{#1}{#2}}
}
\makeatother

\usepackage{array}      %
\usepackage{multirow}

\hypersetup{pdfauthor={XYZ, Clayton Thomas}}

\algdef{SE}[DOWHILE]{Do}{doWhile}{\algorithmicdo}[1]{\algorithmicwhile\ #1}%

\usepackage{comment}

\usepackage{sidecap}

\usepackage{csquotes}
\usepackage{colortbl}

\usepackage{tikz}
\usetikzlibrary{automata, positioning, calc, fit, shapes.misc}
\usetikzlibrary{matrix}
\usetikzlibrary{positioning}
\usetikzlibrary{arrows, decorations.pathmorphing}
\usetikzlibrary{decorations.pathreplacing,calligraphy}

\usepackage{tikz}

\newcommand{\urtriangle}{%
  \mathbin{%
    \tikz[baseline=-0.5ex, scale=0.15]{
      \draw (0,1) -- (1,1) -- (1,0) -- cycle;
    }%
  }%
}

\usepackage[USenglish]{babel}
\usepackage[useregional]{datetime2}
\DTMlangsetup[en-US]{showdayofmonth=false}

\begin{document}

\title{
Learning from a Mixture of Information Sources
}
\author{
  Nicole Immorlica\\
  Microsoft Research
  \and
  Brendan Lucier\\
  Microsoft Research
  \and
  Yaroslav Mukhin\\
  Cornell University
  \and
  Clayton Thomas\\
  Rensselaer Polytechnic Institute
  \and
  Ruqing Xu\thanks{Immorlica: nicimm@gmail.com, Lucier: brlucier@microsoft.com, Mukhin: ym545@cornell.edu, Thomas: thomas.clay95@gmail.com, Xu: rx24@cornell.edu (corresponding author). 
This research
was conducted prior to the author’s employment by Amazon. This paper is not sponsored or endorsed
by, or associated with Amazon or any of its subsidiaries or affiliates. The views, opinions and positions
included in this paper are the author’s own and do not reflect the views, opinions and positions of Amazon.
We thank Marco Battaglini, Takuma Habu, Rohit Lamba, Mira Frick, Navin Kartik, Ankur Mani, Krishnamurthy Iyer, Thodoris Lykouris, Yiding Feng, Liang Lyu, Drew Fudenberg, Tuomas Sandholm, and participants at Cornell Microeconomic Theory reading group, Cornell Artificial Intelligence, Policy, and Practice (AIPP), Marketplace Innovation Workshop (MIW), and ESIF Economics and AI+ML Meeting for helpful discussions and comments.}\\
  Amazon
}

\begin{titlepage}
\maketitle
\begin{abstract}
  We often learn from multiple sources that convey information in different ways. 
How informative is it to know the source of a signal, and how is this informativeness shaped by the distribution of sources?
We extend the standard (binary-state, binary-signal) Blackwell experiment model by introducing a commonly known distribution over signaling schemes, representing the distribution of information sources.
We compare learning under two information models: \emph{source-aware}, where decision makers observe a signaling scheme and its realization (e.g., raw reviews, search results), and \emph{source-blind}, where only the signal realization is observed (e.g., aggregate ratings, LLM-generated summaries). 
We show that a mean-preserving spread in the distribution of signaling schemes translates into Blackwell dominance for source-aware decision makers, implying they are ``risk-loving'' in information sources.
In contrast, it has no impact on source-blind decision makers.
When learning from repeated draws of signaling schemes, source-blind decision makers learn more slowly.
However, as long as the average signaling scheme is $\varepsilon$ away from being completely uninformative, source-blind learning can match source-aware learning by using at most $O(1/\varepsilon)$ times more data.

\end{abstract}
\thispagestyle{empty}
\end{titlepage}

\maketitle

\section{Introduction}
\label{sec:intro}

In economic decision-making, individuals and institutions routinely rely on information from diverse sources. Whether a piece of information---such as a product review, a news article, or a medical recommendation---is informative depends not only on its content but critically on who produces it. A satirical news report, for instance, invites different interpretations compared to a reputable news outlet; medical advice from a layperson lacks the authority of a licensed professional; and the relevance of a restaurant review hinges on how the tastes of the reviewer and the reader compare. 
In these examples, knowing the source of a piece of information can be critical to evaluate its content and to inform decision making.

On the other hand, modern technology offers unprecedented opportunities for summarization and processing large amounts of data at low costs.
While aggregating information opens up the possibility of harnessing large datasets, the source of each data point---reflecting its reliability, bias, or relevance---is often obscured in the process. 
In the age of big data, the common wisdom is that this loss of context can be overcome with ever-increasing amounts of data.
This raises fundamental questions about the role of data provenance in learning. How is informativeness shaped by the distribution of sources?
How does the value of knowing a signal's source compare with the ability to process more samples?

Our paper proposes a framework for answering  these questions.
There is an unknown decision-relevant state of the world. A signaling scheme is a Blackwell experiment that maps the state to a probability distribution over finite signal realizations.
We assume access to a commonly known \emph{distribution} over signaling schemes, representing heterogeneity in information sources.
We consider a learner who can observe multiple signal realizations, each generated by a separate signaling scheme independently drawn from this distribution.
We compare two learning environments, representing different amounts of information available to the learner.
In the \emph{source-aware} setting, the learner observes the details of the signal-generating processes (e.g., reading raw reviews or examining the origin of search results) in addition to the realizations of the signals from those sources. 
In the \emph{source-blind} setting, only the signal realizations are observed by the learner (e.g., seeing aggregated ratings or summary statistics) while the nature of the signal-generating processes is obscured.  
We can ask our main question in the language of the model:

\begin{quote}%
\emph{%
    How many more samples does a source-blind learner require to be as informed---and hence as effective at decision-making---as a source-aware learner? %
}
\end{quote}

As an initial foray into answering this question, we instantiate our framework in a simple binary setting.  There is a state of the world $\theta \in \{0,1\}$, initially unknown to the learner.  Each potential signaling scheme generates a binary signal that can be correlated with $\theta$.  We assume ``no fake data,'' meaning that no signal realization is negatively correlated with the state, but we do allow perfectly uninformative signals that are not correlated with $\theta$.  While restrictive, this binary scenario is nevertheless sufficiently rich to illustrate the gap between source-blind and source-aware learning, while also capturing natural scenarios in which a decision-maker must select whether or not to take an action (such as purchasing a product) given a dataset that takes the form of action recommendations.

In our main result, we give an upper bound on the advantage of being source-aware when learning from many different sources drawn from the distribution.  We prove that if the \emph{average} of all signaling schemes is at least an $\varepsilon$-distance (in an appropriate metric) from being completely uninformative, then a source-blind learner needs no more than $O(1/\varepsilon)$ times as much data to be at least as informed as a source-aware learner in the sense of Blackwell dominance.  This means, in particular, that for any decision problem that depends on the state $\theta$, a source-aware decision maker will be outperformed (in expectation) by a source-blind decision maker with access to $O(1/\varepsilon)$ times as many signals.

Our bound depends on the informativeness of an average signaling scheme with respect to the distribution from which they are drawn.  To see why, imagine a distribution in which a very small fraction of signals are fully informative of the state, while the rest are completely uninformative.  A source-aware learner would then learn the state perfectly if even a single informative signal is observed, whereas for a source-blind learner a single informative signal is likely to become lost in the noise.  However, if sufficiently many informative signals are present in the dataset, then the state could be inferred even from aggregate statistics.  Our analysis essentially shows that this type of example is the worst-case scenario for a source-blind learner, and the resulting bound applies to all potential distributions over data sources.

It is important to delineate the scope of our main result, which rests on two restrictions. First, it is limited to binary states because its proof relies on the characterization of large-sample Blackwell dominance by \citet{mu2021blackwell}, which is established only for that case. \citet{farooq2024matrix} have recently extended this characterization to finitely many states, offering a promising path to extending our results beyond two states, though such an extension is beyond the scope of this initial study. Second, the no-fake-data assumption helps to rule out degenerate cases. For instance, if truthful and fake sources each fully reveal the state, a source-aware learner learns the state from a single signal, whereas for a source-blind learner the two can cancel out, making the advantage of knowing the source unbounded. In \autoref{sec:value-multiple-signal}, we discuss such cases and explain why a naive generalization to this domain leads to uninteresting results.

Along the way to establishing our main result, we analyze how changes in the underlying distribution over signal-generating processes affect the rate at which a source-aware or source-blind decision maker becomes informed.
We show that as the distribution over signal-generating processes becomes ``more spread out,'' in the sense of a mean-preserving spread, a source-aware decision maker becomes strictly more informed in the sense of Blackwell informativeness. %
Again, this implies that the source-aware decision-maker gets weakly higher expected utility, under any decision problem, when learning from more heterogeneous data sources.
In contrast, we show that the source-blind decision maker cares only about the mean signal, and hence her level of informativeness is unchanged when the source become more spread out in the same sense as above.  These results, which may be of independent interest, apply to any finite signal and state space (not necessarily binary).

\paragraph{Related Work.}

There is a \emph{vast} literature on learning from signaling schemes, i.e., Blackwell experiments \citep{blackwell1951comparison,blackwell1953equivalent}.
This literature considers a host of important topics and results, including (but not limited to) how different information experiments can feasibly correlate 
\citep[e.g.,][]{brooks2022information, brooks2024comparisons},
how agents learn in social environments \citep[e.g.,][]{banerjee1992simple, huang2024learning},
how costs can be assigned to experiments \citep[e.g.,][]{pomatto2023cost,bloedel2020cost}, 
and how experiments can be ranked \citep[e.g.,][]{moscarini2002law, azrieli2014comment,kosenko2024beyond}.
Our main result in \autoref{sec:over-provisioning} relies on a related concept from \cite{mu2021blackwell}, the dominance ratio, which gives a quantitative comparison between two experiments based on the information content of many repeated independent samples from  the experiments.

There is relatively little work that, like our model, considers learning in a \emph{distribution} over signaling schemes.
\citet{acemoglu2016fragility} consider agents with different subjective priors over signaling schemes.
They consider how agents learn from repeated signal realizations generated by a single signaling scheme drawn once from their prior. In contrast, our main result in \autoref{sec:over-provisioning} consider learning from %
many repeatedly-drawn signaling schemes from a commonly-known distribution of such schemes, so there is no learning about signal-generating processes overtime.
\citet{cripps2008common} considers agents that learn from correlated signals and studies conditions under which joint learning occurs and is common knowledge.  In our scenario, a source-aware learner has only more information than the source-blind learner, and hence joint learning is assured at the rate at which the source-blind agent learns; our focus is on comparing the rates at which they learn.
Our result in \autoref{thm:triangle-spread-to-posterior-spread} concerning the mean-preserving spread of signaling schemes generalizes a part of a result in \citet{whitmeyer2024blackwellmonotoneupdatingrules}, which finds that a Bayesian decision maker weakly prefers randomization over two experiments to their convex combination.
They define this as ``convexity,'' and show that Bayes' rule is the only updating rule satisfying convexity and certain other requirements.

Beyond the context of learning, recent works on communication games 
have incorporated ambiguity into signaling schemes, similar to how we introduce heterogeneous signaling schemes. In ambiguous persuasion \citep{beauchene2019ambiguous,cheng2023ambiguouspersuasionexanteformulation, cheng2024persuasion},
the sender is allowed to use ambiguous communication devices comprising a finite set of signaling schemes, while the receiver is assumed to be ambiguity-averse and to maximize their maxmin expected utility. 
Similarly, \citet{kellner2017modes} studies cheap talk
communication with purely exogenous ambiguity and \citet{kellner2018endogenous} studies when sender can strategically choose to communicate ambiguously.

Other papers study conceptually similar questions to ours in different settings.
Motivated by investigating the ``value of context,'' \citet{iakovlev2024value}
ask whether an agent prefers to be evaluated by a human who observes a smaller but customized set of covariates, or an algorithm which observes a bigger set of non-customizable covariates. 
They find that the expected value of customization vanishes to zero as the number of covariates grows.
In a more classic setting, several papers have considered the value of context when learning from signals generated by individuals connected in a social network. For example, if individuals receive binary signals about an unknown state of the world and repeatedly report their best guess of this state to their neighbors, the ``context'' (i.e., the structure of the network) impacts whether the society can learn the state.  If individuals are Bayesian and correctly incorporate the context in their updating process, learning will be asymptotically efficient for a wide range of dynamics \cite{gale2003bayesian,mossel2015strategic,mossel2016efficient}. However, if individuals simply update using a majority dynamic, then the society might fail to learn the state, and multiple beliefs might persist indefinitely unless the social network and dynamics satisfy certain properties \cite{feldman2014reaching,mossel2014majority}.
Another domain related to the value of context is the problem of learning mixture models, whereby data is known to be generated from multiple sources but the source label (i.e., context) is hidden. Depending on the underlying data distribution from each individual source, learning the mixture model may be substantially more difficult than if the source labels were revealed~\cite{dasgupta1999learning,dasgupta2005learning}.

Our main result, \autoref{thm:overprovisioning}, shows that if the average signal is not too uninformative, then for any decision problem, a source-blind learner with access to only $O(1/\epsilon)$ times more
signals will outperform a source-aware learner. This has a similar flavor to \cite{bulow1994auctions}'s seminal result that in independent private value auctions, an auctioneer can generate more profit by adding one more bidder than by switching to the optimal auction mechanism. Similarly, the seminal paper in algorithmic game theory, \cite{roughgarden2002bad}, shows that a network suffers less total latency with Nash routing and twice the edge capacity than that of the optimal routing. As such, our result joins the common refrain in the Economics and Computation literature that the optimal mechanism can often be bested by a simple mechanism plus a bit more resources.

\paragraph{Paper Organization.}

The rest of the paper proceeds as follows.
We define our model of source-aware and source-blind learning in \autoref{sec:prelims}.
We give background on the main concept needed to express our results---the Blackwell order---in \autoref{sec:prelims-blackwell},
and makes basic observations comparing source-aware and -blind learning in \autoref{sec:model}.
We study source-aware learners in \autoref{sec:triangle-spread}, and prove they are ``risk-loving'' in information sources.
\autoref{sec:binary} introduces a tractable special-case of our general model: that of binary states and signals under a ``no-fake-data'' assumption.
In \autoref{sec:value-of-sources}, we consider this special case and give our main result, which bounds the advantage of knowing the source when learning from many sources.
\autoref{sec:conclusion} concludes. We discuss the challenges in extending the analysis to more general states and signaling schemes in \autoref{sec:value-multiple-signal}.
All proofs in the main text are deferred to \autoref{sec:omitted-proofs}.

\section{Model and Preliminaries}
\label{sec:prelims}

\subsection{Signaling Schemes}
\label{sec:prelims-signals}

Let $\Theta = \{0, 1, \dotsc, k-1\}$ be a finite set of $k \geq 2$ \emph{states}.
For concreteness, we assume that the decision maker initially has a uniform prior belief over $\Theta$; 
since our results all concern Blackwell dominance and other prior-free notions, standard arguments imply that our results hold for any other fixed prior.

A \emph{signaling scheme} is a function $\pi: \Theta \to \Delta(\Omega)$ for some finite realization space $\Omega$. 
We refer to an $\omega \in \Omega$ as a \emph{signal realization}.
We can equivalently view a signaling scheme $\pi$ as a 
set of probability measures $(P^{\pi}_\theta)_{\theta \in \Theta}$
over $\Omega$, where $P^{\pi}_\theta$ is the measure of signals realized when the state is $\theta$. 
We use $p^{\pi}_{\theta\omega} = \pi(\omega \mid \theta)$ for $\omega\in \Omega$ and $\theta\in \Theta$ to denote the specific entries in the probability measure $P^{\pi}_{\theta}$.\footnote{
    The standard notation $\pi(\omega \mid \theta)$ represents $\P{\pi(\theta) = \omega}$, the probability of observing signal realization $\omega$ given that the state is $\theta$.
    Additionally, when no confusion can arise, we write a distribution (e.g., $\pi(\theta)$ or $\tau$) with the same notation as a random variable following that distribution (e.g., we consider events $\pi(\theta)=\omega$ or $\tau = q$). } Whenever it is clear which signaling scheme we refer to, we may omit the superscript and write $P_\theta$ and $p_{\theta\omega}$.
We let $\mathcal{P}$ denote the set of feasible signaling schemes, which we refer to as the \emph{domain}.

Given a signaling scheme $\pi$, let $\tau_{\pi} \in \Delta(\Delta(\Theta))$ denote the distribution over posterior beliefs which arises from the signaling scheme $\pi$. 
Since $|\Theta| = k$, 
we typically represent $\tau_{\pi}$ as an element of $\Delta(S_k)$, where $S_k = \{ (q_0, \dotsc, q_{k-1}) \in [0,1]^k\ |\ \sum_{i=0}^{k-1} q_i = 1 \}$ denotes the $k$-simplex and the event $\tau_\pi = (q_0, \dotsc, q_{k-1})$ represents the posterior belief that the state is $i$ with probability $q_i$.

For an concrete example of a signaling scheme and the induced distribution of posterior beliefs, we refer the reader to \autoref{sec:binary} for a discussion under binary state and signal spaces.

\subsection{Learning and Information Models}

We focus on a decision maker who wishes to learn about the state $\theta \in \Theta$ from one or more signal realizations. 
Informally, these are realized according to signal processes that are themselves repeatedly drawn from a  \emph{distribution} $\Pi$ over the domain of signaling schemes $\mathcal{P}$, i.e., $\Pi \in \Delta(\mathcal{P})$.
The standard Blackwell experiment framework with a single signaling scheme $\pi \in \mathcal{P}$ thus corresponds to the distribution $\Pi$ which is a single point mass at $\pi$.

We compare learning under two information models: source-aware and source-blind learning.

First, we introduce source-aware learning.
Intuitively, we say a decision maker is \emph{source-aware} if she learns how a piece of information is generated, together with the content of the information. 
This learning regime may represent environments where one reads raw reviews on a platform or read sources from a search engine.

Formally, we define the source-aware model as follows.
Nature draws (i) a state 
$\theta \in \Theta$
according to the uniform prior, (ii) a signal scheme $\pi \sim \Pi$, and (iii) a signal realization $\omega \sim P^{\pi}_{\theta}$.
The decision maker learns the tuple $(\pi, \omega)$, i.e., the signal scheme and a realization from it.
This process is equivalently a single signal scheme with an enriched signal space $\mathcal{P} \times \Omega$.
Note that the draw of $\pi$ from the distribution $\Pi$ is independent of the state---only the realization $\omega$ is state-dependent.
We denote this signaling scheme by $A(\Pi)$ where $A$ is a mnemonic for ``aware.''

Second, we introduce source-blind learning.
Intuitively, we say the decision maker is \emph{source-blind} if she sees a piece of information without learning its source. 
This learning regime may represent environments where one reads aggregated ratings or generated summaries where information about sources are often suppressed.

Formally, in source-blind learning, nature draws a random $\theta, \pi,$ and $\omega$ exactly as before,
but the decision maker only sees $\omega$ and not $\pi$. 
Thus, this constitutes its own signaling scheme with signal space $\Omega$.
We refer to this signaling scheme as $B(\Pi)$ where $B$ is a mnemonic for ``blind.''

\section{Background: The Blackwell Order}
\label{sec:prelims-blackwell}

We are interested in how useful different signaling schemes are to decision makers.
To that end, in this section we review a classic tool for comparing different information sources.

The canonical  way to capture and compare the usefulness of signaling schemes is via the Blackwell informativeness ordering \citep{blackwell1953equivalent}, which is a partial order $\succeq$ over signaling schemes such that $P \succeq Q$ if and only if \emph{every} decision maker prefers $P$ over $Q$.
We use several classical equivalent definitions for the Blackwell order, particularly using the concepts of garblings of signaling schemes and mean-preserving spreads of posterior distributions.

\begin{definition}[Garblings of signaling schemes]
  \label{def:garble}
  Let $P : \Theta \to \Delta(\Omega)$ and $Q : \Theta \to \Delta(\Xi)$ be two signaling schemes with state space $\Theta$ and realization spaces $\Omega$ and $\Xi$.
  We say that a \emph{garbling function} from $P$ to $Q$ is a function $g: \Omega \to \Delta(\Xi)$ such that for all $\theta\in\Theta$, if we sample $\omega\sim P(\theta)$ and $\xi_g \sim g(\omega)$, then $\xi_g$ is equal in distribution to $Q(\theta)$.
  We say $Q$ is a \emph{garbling} of $P$ if there exists a garbling function from $P$ to $Q$.
\end{definition}

\begin{definition}[Mean-preserving spread in $\R^m$]
  \label{def:mps}
  For random variables $X_1$ and $X_2$ in $\Delta(\R^m)$, we say that a \emph{spread function} from $X_1$ to $X_2$ is a function $s: \R^m \to \Delta(\R^m)$ such that:
  \begin{enumerate}[(1)]
      \item For all $t$ in the support of $X_1$, we have $\E{s(t)} = t$. %
      \item %
      If we draw $z \sim X_1$ and then $y \sim s(z)$, then $y$ is equal in distribution to $X_2$.
  \end{enumerate}
  We say $X_2$ is a \emph{mean-preserving spread} of $X_1$ if there exists a spread function from $X_1$ to $X_2$.\footnote{
    Equivalently, $X_2$ is a mean-preserving spread of $X_1$ if there exists a random variable $Z$ such that for all $x_1\in\supp(X_1)$, we have (1) $\E{Z \mid X_1 = x_1}$, and (2) $X_1 + Z$ is equal in distribution to $X_2$.
    Note that we define mean-preserving spreads over the slightly more-general space of $\R^m$, and not just the space of posterior distributions (as is needed to state \autoref{thm:blackwell}); 
    this generality will be useful in \autoref{sec:triangle-spread} when we discuss mean-preserving spreads over the space of signaling schemes $\mathcal{P}$.
  }
\end{definition}

Formally, the equivalences we use can now be states as follows.

\begin{theorem}[\cite{blackwell1953equivalent}]
\label{thm:blackwell}
  Let $P : \Theta \to \Delta(\Omega)$ and $Q : \Theta \to \Delta(\Xi)$ be two signaling schemes with state space $\Theta$ and realization spaces $\Omega$ and $\Xi$.
  The following statements are equivalent:
\begin{enumerate}
    \item[\textbf{(i)}] For every decision problem with state space $\Theta$, any  Bayesian decision maker can achieve weakly higher expected utility under $P$ than under $Q$.\footnote{
        Formally, a decision problem specifies a prior over $\Theta$, a set of actions $A$, and a utility function $u : \Theta \times A \to \R$. 
        Under the signaling scheme $P$, the model is as follows.
        Nature draws the state $\theta$, the decision maker sees a signal realization drawn from $P(\theta)$, and then takes an action that maximizes her utility based on  the signal realization she receives.
    }
    \item[\textbf{(ii)}] $Q$ is a garbling of $P$.
    \item[\textbf{(iii)}] The posterior distribution $\tau_P$ is a mean-preserving spread of $\tau_Q$.
\end{enumerate}
In this case, we say that $P$ \textbf{Blackwell dominates} $Q$, denoted by $P \succeq Q$.
\end{theorem}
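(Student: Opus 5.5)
We prove the three equivalences in the order (ii)$\Rightarrow$(i), (ii)$\Leftrightarrow$(iii), and (i)$\Rightarrow$(ii). Throughout we work with a fixed full-support prior, taken to be uniform as in the model; the prior-independence of garbling makes this harmless.

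\textbf{(ii)$\Rightarrow$(i).} Take a garbling function $g$ from $P$ to $Q$ and any strategy $\sigma:\Xi\to\Delta(A)$ used under $Q$. The composed strategy $\omega\mapsto\sigma(g(\omega))$ under $P$ induces the same joint distribution of (state, action). It therefore attains the same expected utility, so the optimum under $P$ is weakly higher.

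\textbf{(ii)$\Leftrightarrow$(iii).} Identify each realization with the posterior it induces. Given $g$, the posterior after $\xi$ is a convex combination of the posteriors $q_\omega$ after the realizations $\omega$ mapped to $\xi$. The weights are $\P{\omega\mid\xi}$, computed by Bayes' rule. Hence the map $q_\xi\mapsto$ (the law of $q_\omega$ given $\xi$) is a spread function from $\tau_Q$ to $\tau_P$, and conditions (1) and (2) of \autoref{def:mps} are immediate.

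Conversely, given a spread function $s$ from $\tau_Q$ to $\tau_P$, define the reverse kernel $g(q'\mid q)=\tau_Q(q)\,s(q)(q')/\tau_P(q')$. Then check, using Bayes' rule and the martingale property $\E{s(q)}=q$, that $\sum_{q'}P(q'\mid\theta)\,g(q\mid q')=Q(q\mid\theta)$. The key identity is $P(q'\mid\theta)=\tau_P(q')\,q'_\theta/\mu_\theta$, where $\mu$ is the prior. Substituting and summing gives $\tau_Q(q)\,\E{s(q)}_\theta/\mu_\theta=\tau_Q(q)\,q_\theta/\mu_\theta=Q(q\mid\theta)$.

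\textbf{(i)$\Rightarrow$(ii) (or (iii)).} This is the main obstacle, and the approach is a separating-hyperplane argument. Suppose $Q$ is not a garbling of $P$. The set $\mathcal{G}=\{(\sum_\omega P(\omega\mid\theta)g(\cdot\mid\omega))_\theta : g \text{ a garbling function}\}$ is a compact convex subset of $\R^{\Theta\times\Xi}$ not containing $Q$. Strict separation then yields weights $u(\theta,\xi)$ with
\[
\sum_{\theta,\xi}u(\theta,\xi)\,Q(\xi\mid\theta) \;>\; \max_{g}\sum_{\theta,\xi}u(\theta,\xi)\sum_{\omega}P(\omega\mid\theta)\,g(\xi\mid\omega).
\]
Now interpret $A=\Xi$ as the action set and $u(\theta,a)/\mu_\theta$ as the utility under the prior $\mu$. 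The right-hand side is exactly the optimal value under $P$, since any strategy $\Omega\to\Delta(A)$ is some $g$. The left-hand side is the value under $Q$ of the particular strategy ``act as the signal says,'' which is at most the optimal value under $Q$. So $Q$ strictly beats $P$ in this decision problem, contradicting (i).

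The care needed here is in the finite-dimensional setup. We must check that $\mathcal{G}$ is convex and compact, being the image of a product of simplices under a linear map. We must also handle the prior weighting, which the full-support uniform prior makes routine.
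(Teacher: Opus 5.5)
The paper gives no proof of this statement. It is quoted from \citet{blackwell1953equivalent} as background and used as a black box, so there is no argument in the paper to compare your route against. Judged on its own, your proof is the standard self-contained one, and it is correct:
(ii)$\Rightarrow$(i) follows by composing strategies with the garbling, which works for every prior, as (i) requires. (ii)$\Leftrightarrow$(iii) follows from the martingale property of posteriors together with a Bayes reversal of the spread kernel. (i)$\Rightarrow$(ii) follows by strictly separating $Q$ from the compact convex set of garblings of $P$ and reading the separating functional as a utility on the action set $\Xi$. That last step correctly uses the fact that strategies $\Omega\to\Delta(\Xi)$ under $P$ are exactly the kernels generating that set.

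What your version buys over the citation is self-containedness. It also makes visible that the paper's own proofs (of \autoref{thm:AdominatesB}, \autoref{thm:triangle-spread-to-posterior-spread} and \autoref{thm:pi-star-ratio}) rely only on garblings and the elementary equivalence (ii)$\Leftrightarrow$(iii). The paper's welfare interpretations need only (ii)$\Rightarrow$(i). Your separation argument is needed only for the converse, i.e.\ for Blackwell dominance to be \emph{characterized} by the decision-problem ranking.

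Two small things to tidy. First, the reverse kernel should be written $g(q\mid q')=\tau_Q(q)\,s(q)(q')/\tau_P(q')$, a law over $Q$-posteriors given a $P$-posterior. Your definition swaps the arguments, although your verification uses it correctly. You should also note that it sums to one by condition (2) of \autoref{def:mps}.

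Second, ``identify each realization with its posterior'' hides two facts worth stating:
\begin{itemize}
\item In (ii)$\Rightarrow$(iii), the spread at $q$ must be the law of $q_\omega$ given the event $\{q_\xi=q\}$, since several $\xi$ may share a posterior.
\item In (iii)$\Rightarrow$(ii), you only obtain a garbling between the posterior experiments. You must compose it with the kernel that draws $\xi$ from its conditional law given $q_\xi=q$. That kernel is state-independent because $Q(\xi\mid\theta)=\P{\xi}\,(q_\xi)_\theta/\mu_\theta$, so composing gives that $Q$ itself is a garbling of $P$.
\end{itemize}
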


\subsection{The Dominance Ratio}
\label{sec:def-renyi-dominance-ratio}
\label{sec:prelims-dominance-ratio}

A Blackwell ordering relation $P\succeq Q$ captures a canonical and strong notion of $P$ being more informative than $Q$; 
however, it remains an ordinal ranking.
In other words, the Blackwell order can tell us which signaling scheme is better, but cannot tell us \emph{by how much}.

To address the above, a line of work has considered quantitative comparisons between signaling schemes $P, Q$ when learning from a large number of repeated independent samples from $P(\theta)$ vs. $Q(\theta)$. 
We follow the approach given by the \emph{dominance ratio} in \cite{mu2021blackwell}.\footnote{
  The main result of \cite{mu2021blackwell} characterizes the \emph{large sample} ordering of signaling schemes, which $P$ dominating $Q$ in large samples means that there is an $n$ large enough so that $n$ repeated draws from $P$ Blackwell dominates $n$ repeated draws from $Q$.
  This ordering was first introduces by \citet{azrieli2014comment} as a refinement of the ordering proposed by \cite{moscarini2002law} (where the above $n$ is allowed to depend on the particular decision problem).
  \citet{mu2021blackwell} prove \autoref{thm:mu-etal-dominanceratio} as a corrolary of their main characterization.
  In addition to providing a quantification of the advantage of a signal over another, this notion is able to compare all pairs of signaling schemes, and not only those that are Blackwell ranked.
}
The dominance ratio bounds the comparative number of samples from one signaling scheme needed to Blackwell dominate another signaling scheme.
For example, if observing 25 independent samples from a signaling scheme $P$ is Blackwell more informative than 100 independent samples from a signaling scheme $Q$, then intuitively, we say $P$ is (at least) 4 times as informative as $Q$ for large enough samples, and the dominance ratio is (at least) $4$.
This notion is formalized as follows.

\begin{definition}[\cite{mu2021blackwell}]
    The dominance ratio of two signaling schemes $P$ and $Q$ is defined as:

    \begin{equation*}
        P/Q = \sup \left\{\frac{m}{n}: P^{\otimes n} \succeq Q^{\otimes m} \right\} 
    \end{equation*}
    where $P^{\otimes n}$ denote the $n$-fold signaling scheme where $n$ independent observations are generated according to the signaling scheme $P$.
    \end{definition}
    
Intuitively, a dominance ratio of $r$ suggests that $P$ will be at least $r$ times as informative as $Q$ in large samples. 
We use the characterization from \cite{mu2021blackwell} of the dominance ratio in terms of the statistical notion of R\'enyi divergences.

\begin{definition}[\cite{renyi1961measures}]
    Given a signaling scheme $P$ with binary state space $\Theta = \{0,1\}$, signal space $\Omega$ and probability measures $P_0$ and $P_1$. The R\'enyi t-divergence of $P$ under $\theta$ is 
    \begin{align*}
        R_P^{\theta}(t) = R_t(P_{\theta} \parallel P_{1-\theta} ) = \frac{1}{t-1} \log \left( \int_{\Omega} \left( \frac{\mathrm{d}P_{\theta} }{\mathrm{d}P_{1-\theta} }(\omega) \right)^{t-1} \mathrm{d}P_{\theta}  \right)
    \end{align*}
    When $t \to 1$, it is the Kullback-Leibler divergence.
    \begin{align*}
        R_P^{\theta}(1) = R_1(P_{\theta} \parallel P_{1-\theta} ) =  \int_{\Omega}\log  \left( \frac{\mathrm{d}P_{\theta} }{\mathrm{d}P_{1-\theta} }(\omega) \right)\mathrm{d}P_{\theta} 
    \end{align*}
\end{definition}

\begin{theorem}[\cite{mu2021blackwell}]
\label{thm:mu-etal-dominanceratio}
Under binary states $\Theta = \{0,1\}$, we have
\begin{align*}
    P/Q = \inf_{\theta\in\{0,1\}, t>0} \frac{R_P^{\theta}(t)}{R_Q^{\theta}(t)}
\end{align*}
\end{theorem}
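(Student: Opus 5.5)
The plan is to prove the two inequalities separately. Both rest on one fact: the R\'enyi divergences $R^\theta_P(t)$ turn repeated independent sampling into multiplication by the number of samples.

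\emph{Step 1 (additivity).} For product experiments the likelihood ratio factorizes, $\frac{\mathrm{d}P^{\otimes n}_\theta}{\mathrm{d}P^{\otimes n}_{1-\theta}}(\omega_1,\dots,\omega_n)=\prod_i \frac{\mathrm{d}P_\theta}{\mathrm{d}P_{1-\theta}}(\omega_i)$. The integral in the definition of $R_t$ therefore factorizes as well, and taking logarithms gives
\begin{equation*}
R^\theta_{P^{\otimes n}}(t)=n\,R^\theta_P(t)\qquad\text{for all } t>0,\ \theta\in\{0,1\}.
\end{equation*}
At $t=1$ this follows by passing to the limit, or directly from the KL chain rule.

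\emph{Step 2 (the sup is at most the inf).} I will show that each $R_t(\cdot\parallel\cdot)$ is monotone in the Blackwell order. If $P\succeq Q$, then by \autoref{thm:blackwell}(ii) $Q$ is a garbling of $P$. Under the same garbling, the pair $(Q_\theta,Q_{1-\theta})$ is the image of $(P_\theta,P_{1-\theta})$ through a common Markov kernel. The data-processing inequality for R\'enyi divergences then gives $R^\theta_P(t)\ge R^\theta_Q(t)$. For $t>1$ this is Jensen applied to the convex map $x\mapsto x^{t}$, and for $t<1$ to the concave map; since the sign of $\frac1{t-1}$ flips at the same point, the inequality holds in both ranges. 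Now suppose $P^{\otimes n}\succeq Q^{\otimes m}$. Monotonicity and Step 1 give $nR^\theta_P(t)\ge mR^\theta_Q(t)$ for every $t,\theta$. Hence $m/n\le R^\theta_P(t)/R^\theta_Q(t)$ for every $t,\theta$, and taking the supremum over admissible $(m,n)$ bounds $P/Q$ by the infimum.

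\emph{Step 3 (the inf is at most the sup).} Fix a rational $r=m/n$ strictly below the infimum, and set $P'=P^{\otimes n}$ and $Q'=Q^{\otimes m}$. By Step 1, $R^\theta_{P'}(t)=nR^\theta_P(t)>mR^\theta_Q(t)=R^\theta_{Q'}(t)$ for all $t>0$ and $\theta$. I then invoke the main large-sample characterization of \cite{mu2021blackwell}: if $R^\theta_{P'}(t)>R^\theta_{Q'}(t)$ for all $t>0$ and $\theta\in\{0,1\}$, then $P'^{\otimes k}\succeq Q'^{\otimes k}$ for some $k$. This yields $P^{\otimes nk}\succeq Q^{\otimes mk}$, so $P/Q\ge mk/(nk)=r$. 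Rationals are dense, so letting $r$ increase to the infimum finishes the proof.

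\emph{Main obstacle.} The hard content is the large-sample characterization invoked in Step 3, which I would treat as the black box underlying this corollary. The remaining care needed is at the boundary. Strict inequality for every $t$ is what is needed, and it holds automatically because $r$ is strictly below the infimum, provided the ratios are well defined. I would therefore separately handle the following degenerate cases:
\begin{itemize}
\item $Q$ completely uninformative, so that every $R_Q=0$; here both sides equal $+\infty$.
\item $P$ with zero-probability signals, so that $R_P=\infty$ for large $t$.
\item The limits $t\to0$ and $t\to\infty$, where the divergences converge to quantities determined by the supports and by the maximal likelihood ratio. Here I would check that the infimum over $t>0$ still correctly reflects these limits, possibly by comparing the limiting values directly.
\end{itemize}
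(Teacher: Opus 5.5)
Your proposal is correct and takes essentially the paper's route: the paper gives no proof of its own but, as its footnote says, imports the formula from \cite{mu2021blackwell} as a corollary of their large-sample characterization. That is exactly your reduction, with data processing plus additivity giving $P/Q\le\inf$ and the characterization applied to $P^{\otimes n}$ versus $Q^{\otimes m}$ giving the reverse inequality. The only direction the paper actually uses is the $t=1/2$ bound in the proof of the over-provisioning theorem, which is your self-contained Step 2; that step remains valid for experiments with zero-probability signals such as $A(\Pi^*)$, so the boundedness issues you flag matter only for the reverse inequality.
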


\section{Source-Aware and Source-Blind Learning: Preliminaries}
\label{sec:model}

We now return to our model of learning from a mixture of information sources, both with and without the knowledge of the source, and provide some preliminary analysis.

To begin our analysis, we consider the source-blind signaling scheme.
While one may worry that a source-blind decision maker may need to think in a sophisticated way about the distribution of signaling schemes, it turns out that there is a simple way to view source-blind learning. 
Namely, the blind signaling scheme $B(\Pi)$ exactly corresponds to the average of all signaling schemes in $\Pi$, where the average is calculated by treating $\pi$ as an element of 
$\R^{|\Theta| \times |\Omega|}$, given by $( p_{\theta\omega} )_{\theta \in \Theta, \omega\in\Omega}$.

\begin{proposition}\label{thm:source-blind}
     For any distribution of signaling schemes $\Pi \in \Delta(\mathcal{P})$, we have that $B(\Pi)$ is equivalent to the ``mean signal'' $\overline{\pi} = \E{\Pi}$.
\end{proposition}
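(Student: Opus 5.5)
The plan is a direct computation from the definition of $B(\Pi)$. I will show that the source-blind scheme and the mean signal $\overline{\pi}$ have the same realization space $\Omega$ and assign the same conditional probability to every pair $(\theta,\omega)$. Two signaling schemes with identical conditional distributions $P_\theta$ for every $\theta$ are the same Blackwell experiment, so this suffices: they induce the same posterior distribution and each is trivially a garbling of the other.

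First I fix what $\overline{\pi}$ means. View each $\pi \in \mathcal{P}$ as the vector $(p^{\pi}_{\theta\omega})_{\theta\in\Theta,\omega\in\Omega} \in \R^{|\Theta|\times|\Omega|}$ and set $\overline{p}_{\theta\omega} = \Es{\pi\sim\Pi}{p^{\pi}_{\theta\omega}}$. I will check that this is a valid signaling scheme. Each entry lies in $[0,1]$. For each $\theta$, linearity of expectation gives $\sum_{\omega} \overline{p}_{\theta\omega} = \Es{\pi\sim\Pi}{\sum_\omega p^{\pi}_{\theta\omega}} = 1$. When $\Pi$ has infinite support, the entries are bounded and measurable in $\pi$, so the expectation is well defined.

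Next I compute the law of the observed signal under $B(\Pi)$. In the generative process, $\pi\sim\Pi$ is drawn independently of $\theta$, and then $\omega\sim P^{\pi}_\theta$. By the law of total probability, conditioning on $\pi$ and using this independence,
\[
\P{\omega \mid \theta} = \Es{\pi\sim\Pi}{\P{\omega \mid \theta,\pi}} = \Es{\pi\sim\Pi}{p^{\pi}_{\theta\omega}} = \overline{p}_{\theta\omega}.
\]
Since the source-blind learner sees only $\omega$, the scheme $B(\Pi)$ is exactly $\theta\mapsto(\overline{p}_{\theta\omega})_{\omega}$, which is $\overline{\pi}$.

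The only delicate point is the first equality. It needs the distribution of $\pi$ given $\theta$ to equal $\Pi$, and this is exactly the model's assumption that the source draw is state-independent. Without that assumption, the mixture would be weighted by $\Pi(\cdot\mid\theta)$ and the result would fail. Everything else is linearity of expectation. Because the two schemes coincide as maps $\Theta\to\Delta(\Omega)$, they also have identical posterior distributions $\tau$, so they are equivalent in every sense of \autoref{thm:blackwell}.
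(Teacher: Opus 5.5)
Your proposal is correct and takes the same approach as the paper: it computes that, under $B(\Pi)$, realization $\omega$ occurs in state $\theta$ with probability $\mathbb{E}_{\pi\sim\Pi}\bigl[p^{\pi}_{\theta\omega}\bigr] = p^{\overline{\pi}}_{\theta\omega}$. Your additional checks are sound details that the paper leaves implicit: that $\overline{\pi}$ is a valid signaling scheme, and that the first equality relies on the source draw being independent of the state.
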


Next, we show that for a given $\Pi$, source-aware decision makers always learn more information than source-blind ones.
This holds because the source-aware decision maker has the option to simply \emph{ignore the source}.
Formally:

\begin{proposition}\label{thm:AdominatesB}
    For any distribution of signaling schemes $\Pi \in \Delta(\mathcal{P})$, we have $A(\Pi) \succeq B(\Pi)$.
\end{proposition}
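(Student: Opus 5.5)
We will use the garbling characterization of the Blackwell order, criterion \textbf{(ii)} of \autoref{thm:blackwell}. The aim is to exhibit a garbling function from $A(\Pi)$, whose realization space is $\mathcal{P}\times\Omega$, to $B(\Pi)$, whose realization space is $\Omega$. The natural candidate is the deterministic projection $g(\pi,\omega)=\delta_\omega$: it forgets the source and keeps the realization.

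The key step is to check that, for every state $\theta$, the pushforward of $A(\Pi)(\theta)$ under $g$ equals $B(\Pi)(\theta)$. By construction, $A(\Pi)(\theta)$ is the joint law of $(\pi,\omega)$ with $\pi\sim\Pi$ drawn independently of $\theta$ and $\omega\sim P^{\pi}_\theta$. Its $\Omega$-marginal is therefore the mixture $\int P^{\pi}_\theta\,\mathrm{d}\Pi(\pi)$. This is exactly the law of the observed $\omega$ in the source-blind model, since Nature's draw is identical and only $\pi$ is hidden. Equivalently, by \autoref{thm:source-blind}, it is $P^{\overline{\pi}}_\theta$ with entries $\overline{p}_{\theta\omega}=\E{p^{\pi}_{\theta\omega}}$.

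Applying \autoref{thm:blackwell}, $B(\Pi)$ is a garbling of $A(\Pi)$, so $A(\Pi)\succeq B(\Pi)$. The same argument also gives the interpretation via \textbf{(i)}: a source-aware decision maker can mimic any source-blind strategy by ignoring $\pi$.

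The only mild obstacle is measure-theoretic, in case $\mathcal{P}$ is infinite and $\Pi$ is not finitely supported. There $\mathcal{P}\times\Omega$ is not a finite realization space. We will handle this by viewing $\mathcal{P}$ as a subset of $\R^{|\Theta|\times|\Omega|}$ with its Borel structure. The map $\pi\mapsto p^{\pi}_{\theta\omega}$ is then a bounded measurable coordinate function, so the mixture integral is well defined. The projection $g$ is measurable, so the garbling argument goes through verbatim, and nothing beyond Fubini is needed.
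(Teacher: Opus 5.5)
Your proposal is correct and takes the same approach as the paper: it uses the same projection garbling $\gamma(\pi,\omega)=\omega$ from $A(\Pi)$ to $B(\Pi)$ and invokes criterion (ii) of \autoref{thm:blackwell}. Your explicit check that the $\Omega$-marginal is the mixture $\int P^{\pi}_\theta\,\mathrm{d}\Pi(\pi)$, and your measurability remark for non-finitely-supported $\Pi$, are details the paper leaves implicit.
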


We end this section by introducing some notation.  In our analysis, it is sometimes useful to consider the distribution of $\Pi$ conditional on a given posterior $q \in S_k$ resulting from $A(\Pi)$.
We denote this by $\Pi|q$.  In other words, $\Pi|q$ denotes a source-aware decision maker's posterior distribution over $\Pi$, conditional on having received a signal realization such that their posterior distribution over $\Theta$ is $q$.

\autoref{sec:binary-example} provides an example illustrating \autoref{thm:source-blind} and \autoref{thm:AdominatesB} in the binary state and signal setting. Under this setting, we also provide a visualization for the distribution $\Pi|q$ in \autoref{fig:pi-conditional-on-q}.

\section{Source-Aware Learners Are Risk-loving in Information}
\label{sec:triangle-spread}

In this section we derive a result about how source-aware decision makers compare different signaling scheme distributions.
We show that source-aware decision makers always prefer ``more spread out'' distributions over signaling schemes.
This is formalized with the standard notion of a mean-preserving spread in the space of distributions over $\mathcal{P}$, where $\mathcal{P}$ is treated as a subset of multi-dimensional Euclidean space, via precisely the same definition as in \autoref{sec:prelims-blackwell}.
This will hold due to much the same intuition as in \autoref{thm:source-blind}: a source-aware learner will always prefer ``more spread out'' signaling schemes in the same way that a learner would rather \emph{be} source-aware, instead of learning based on only the mean of the distribution of signaling schemes, i.e., being source-blind.

The main result of this section is the following.

\begin{theorem}
\label{thm:triangle-spread-to-posterior-spread}
    For $\Pi_s, \Pi \in \Delta(\mathcal{P})$, suppose $\Pi_s$ is a mean-preserving spread of $\Pi$, as in \autoref{def:mps}.
    Then, $A(\Pi_s)$ Blackwell dominates $A(\Pi)$.
\end{theorem}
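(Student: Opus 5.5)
We will prove the theorem by exhibiting a garbling from $A(\Pi_s)$ to $A(\Pi)$ and then applying \autoref{thm:blackwell} (ii)$\Rightarrow$(i). The key observation is that $A(\Pi)$ can be generated in two stages. First draw $\pi \sim \Pi$. Then draw $\pi' \sim s(\pi)$, where $s$ is the spread function from $\Pi$ to $\Pi_s$. Finally draw $\omega \sim P^{\pi}_\theta$. Because $\E{s(\pi)} = \pi$ entrywise in $\R^{|\Theta|\times|\Omega|}$, the source-blind scheme built from the conditional distribution $s(\pi)$ is exactly $\pi$, by \autoref{thm:source-blind}.

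Concretely, let $\Pi_s$ be realized as the marginal of the pair $(\pi, \pi')$ with $\pi \sim \Pi$ and $\pi' \sim s(\pi)$. Consider the signaling scheme that reveals $(\pi, \pi', \omega)$ with $\omega \sim P^{\pi'}_\theta$; call it $C$. We need two facts about $C$.

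\emph{Fact 1: $C$ is a garbling of $A(\Pi_s)$.} Since $\pi$ is drawn independently of $\theta$ given $\pi'$, the learner can simulate $\pi$ from $\pi'$. Formally, the garbling function maps $(\pi',\omega)$ to $(\pi,\pi',\omega)$, where $\pi$ is drawn from the conditional law of $\pi$ given $\pi'$. This conditional law does not depend on $\theta$, because $(\pi,\pi')$ is independent of $\theta$.

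\emph{Fact 2: $A(\Pi)$ is a garbling of $C$.} Drop $\pi'$, keeping $(\pi,\omega)$. Conditional on $\theta$ and $\pi$, the law of $\omega$ is
\begin{equation*}
\Es{\pi' \sim s(\pi)}{p^{\pi'}_{\theta\omega}} = p^{\pi}_{\theta\omega},
\end{equation*}
by the mean-preserving property and linearity. This is exactly \autoref{thm:source-blind} applied to the distribution $s(\pi)$. So the pair $(\pi,\omega)$ has the same joint law with $\theta$ as under $A(\Pi)$, since $\pi \sim \Pi$ independently of $\theta$.

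Composing the two garblings gives a garbling from $A(\Pi_s)$ to $A(\Pi)$, so $A(\Pi_s) \succeq A(\Pi)$. An alternative route would use (iii) directly, showing that posteriors under $A(\Pi)$ are conditional expectations of posteriors under $A(\Pi_s)$ via the tower property, but the garbling argument is cleaner.

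The main obstacle is measure-theoretic bookkeeping when $\Pi$ has infinite support in $\mathcal{P}$. In that case we need a regular conditional distribution of $\pi$ given $\pi'$, and the spread function $s$ must be measurable. Since $\mathcal{P}$ is a compact subset of a finite-dimensional Euclidean space, such disintegrations exist. We will either note this or restrict attention to finitely supported distributions, where all conditionals are elementary. The linearity step in Fact 2 also needs the condition $\E{s(\pi)}=\pi$ on the support of $\Pi$, which is given directly by \autoref{def:mps}.
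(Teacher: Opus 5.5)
Your proof is correct, but it runs through garblings (characterization (ii)), while the paper works with posterior distributions (characterization (iii)).

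\textbf{The paper's route.} For each $\pi$ in the support of $\Pi$, it fixes a spread function $t_\pi$ from $\tau_\pi$ to $\tau_{A(s(\pi))}$. This exists because $A(s(\pi))\succeq B(s(\pi))=\pi$ by \autoref{thm:AdominatesB} and \autoref{thm:source-blind}. It then glues these into a spread $t$ on $\tau_{A(\Pi)}$ by drawing $\pi\sim\Pi|q$ and then $q_s\sim t_\pi(q)$. Mean preservation follows from the tower property, and the marginal is correct because $\mathbb{E}_{\pi\sim\Pi}[\tau_{A(s(\pi))}]=\tau_{A(\Pi_s)}$.

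\textbf{Your route.} You use the same fiberwise fact: your Fact 2 is exactly the garbling $A(s(\pi))\to\pi$, performed conditionally on $\pi$. The difference is that you compose in signal space. The extra ingredient you need is Fact 1, namely that revealing $\pi$ alongside $\pi'$ costs nothing. This holds because $\pi$ is conditionally independent of $(\theta,\omega)$ given $\pi'$, not merely of $\theta$; it is immediate, since $\omega$'s law depends only on $(\theta,\pi')$.

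\textbf{What each buys.} Composed, your garbling is just $(\pi',\omega)\mapsto(\pi,\omega)$, with $\pi$ drawn from the reverse kernel of the coupling $(\pi,\pi')$. This is explicit and canonical. It needs nothing from Blackwell's theorem beyond ``garbling implies dominance,'' and no conditioning on posteriors. In the non-discrete case it needs only a single disintegration, whereas the paper's gluing tacitly requires choosing $t_\pi$ measurably in $\pi$. The paper's version, in turn, shows directly how a spread in $\mathcal{P}$ is transported to a spread of posteriors, which is the interpretation the section emphasizes.

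\textbf{One slip.} In your opening overview you write $\omega\sim P^{\pi}_\theta$ where you mean $P^{\pi'}_\theta$; your formal construction of $C$ has it right.
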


\section{A Special Case: Binary States and Signals}
\label{sec:binary}

All of our results to this point have applied to a general framework with finitely many states and signal realizations.  We now instantiate this general framework in the special case with binary states and 
binary signaling schemes, i.e., suppose we have $\Theta = \{0, 1\}$ and $\Omega = \{0,1\}$.
In this case, we denote a signaling scheme $\pi$ as $(p_{00}, p_{11})$, where $p_{00}$ and $p_{11}$ are the probability of the signal realization matching the state when the state is $0$ and $1$, respectively.
\autoref{tab:p0p1} illustrates such a signaling scheme.

We often further restrict attention to the domain of signaling schemes such that signal realization $0$ (resp., $1$) is always evidence that the state is $0$ (resp., $1$).
Formally, we denote the domain as $\mathcal{P}_{\urtriangle}
= \{(p_{00},p_{11}) \mid p_{00}+p_{11} \geq 1\}$.\footnote{
  A straightforward application of Bayes rule show that a signaling scheme $\pi$ increases the posterior belief that the state is $1$ when the signal realization is $1$ if and only if $\pi = (p_{00}, p_{11})$ has $p_{00}+p_{11}\ge 1$
}
We interpret this to mean that there is ``no fake data.''
Hence, we call this restriction the \emph{two-signal, no-fake-data} setting.
This domain of signaling schemes is illustrated in \autoref{fig:triangle}.

Given a signaling scheme $\pi$, recall that $\tau_{\pi} \in \Delta(\Delta(\Theta))$ denotes the distribution over posterior beliefs which arises from the signaling scheme $\pi$. 
Since we consider binary states, we typically represent $\tau_{\pi}$ as an element of $\Delta([0,1])$, where the event $\tau_{\pi} = q$ represents the posterior belief that the state is $1$ with probability $q$. 
By Bayes theorem, the decision maker with a uniform prior who knows $\pi$ and observes $\omega$ forms the posterior $p_{1\omega} / ( p_{0\omega} + p_{1\omega})$.

\def\arrvline{\hfil\kern\arraycolsep\vline\kern-\arraycolsep\hfilneg}

\begin{minipage}[c]{\textwidth}
  \begin{minipage}{0.5\textwidth}
    \begin{center}
    \begin{tabular}{cccccccc}
    &   & \multicolumn{2}{c}{Signal space $\Omega$} \\
    &   & $0$ & $1$
    \\ \cline{3-4}
    \multirow{2}{*}{\makecell{\vspace{-0.1in} \\ State space \\ $\Theta$}} 
      & $0$ \ \hspace{0.02in}\arrvline\hspace{-0.02in}
      &  $p_{00}$ 
      & \makecell{$p_{01}$ \\ {\footnotesize $= 1-p_{00}$} }
    \\[0.1in] 
      & $1$ \ \hspace{0.02in}\arrvline\hspace{-0.02in}
      & \makecell{$p_{10}$ \\ {\footnotesize $= 1-p_{11}$} }
      &  $p_{11}$ 
    \end{tabular}

    {\centering
    \captionof{table}{Binary signaling scheme specified by $(p_{00},p_{11})$} }
    \label{tab:p0p1}
    \end{center}
  \end{minipage}
  \begin{minipage}{0.5\textwidth}
  \centering
  \begin{tikzpicture}[scale=3]
            \draw[->] (0,0) -- (1.2,0) node[right] {$p_{00}$};
            \draw[->] (0,0) -- (0,1.2) node[above] {$p_{11}$};
        
            \draw[thick] (1,0) -- (1,1);
            \draw[thick] (0,1) -- (1,1);
            \draw[thick] (0,1) -- (1,0);
            
            \fill[gray!30] (0,1) -- (1,0) -- (1,1) -- cycle;
        
            \node[below left] at (0,0) {0};
            \node[below] at (1,0) {1};
            \node[left] at (0,1) {1};

        \end{tikzpicture}   
        \captionof{figure}{Two-signal, ``no fake data'' domain}
        \label{fig:triangle}
\end{minipage}
\end{minipage}
\vspace{1em}

\autoref{tab:p0p1-simple-example} shows an example of a (no-fake-data) binary signaling scheme $\pi_z$ with $p_{00}=1/4$ and $p_{11}=7/8$.
The top mini-figure in \autoref{fig:posterior-simple-example} illustrates of the prior: with probability (abbreviated ``w.p.'') equal to $1$, the decision maker begins with the  belief that the state is $1$ with probability equal to $1/2$ (with this belief represented as the horizontal axis).
The bottom mini-figure illustrates the distribution of posteriors the decision maker will have after observing a signal realization from $\pi_z$: when the realization is $0$ (which occurs w.p. $3/16$), her posterior is $3/16$; when the realization is $1$ (which occurs w.p. $13/16$), her posterior is $7/13$.

\vspace{1em}
  \begin{minipage}{0.5\textwidth}
    \begin{center}
    \begin{tabular}{cccccccc}
    &   & \multicolumn{2}{c}{Signal space $\Omega$} \\
    &   & $0$ & $1$
    \\ \cline{3-4}
    \multirow{2}{*}{\makecell{\vspace{-0.1in} \\ State space \\ $\Theta$}} 
      & $0$ \ \hspace{0.02in}\arrvline\hspace{-0.02in}
      &  $1/4$  & $3/4$
    \\ 
      & $1$ \ \hspace{0.02in}\arrvline\hspace{-0.02in}
      & $1/8$ &  $7/8$
    \end{tabular}
    {\centering
    \captionof{table}{Binary signaling scheme specified by $(1/4,7/8)$}\label{tab:p0p1-simple-example} }
    \end{center}
  \end{minipage}
  \begin{minipage}{0.5\textwidth}
  \centering
      \includegraphics[width=0.7\textwidth]{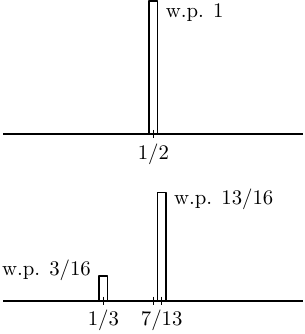}
      \captionof{figure}{Illustration of the prior, and of the distribution of posteriors generated by the signaling scheme in \autoref{tab:p0p1-simple-example}\label{fig:posterior-simple-example}}
\end{minipage}

\subsection{Source-Aware and Source-Blind Learning in Binary Settings}\label{sec:binary-example}

We next present an example illustrating \autoref{thm:source-blind} and \autoref{thm:AdominatesB} in the context of the binary setting.
Consider the case of binary signals under the ``no fake data'' assumption, where we represent signaling scheme as the pair $(p_{00}, p_{11})$, and let $\Pi$ be a uniform distribution over $\{ (1,0), (0,1), (1,1) \}$.
\autoref{thm:source-blind} says that $B(\Pi)$ is equivalent to the single signaling scheme $(2/3, 2/3)$; hence, $\tau_{B(\Pi)}$ is the uniform distribution on the posteriors $\{ 1/3, 2/3 \}$.
On the other hand, $A(\Pi)$ is equivalent to the signalling scheme that reveals the state with probability $1/3$ (and hence yielding posterior $0$ or $1$ with total probability $1/6$ each), and is completely uninformative with probability $2/3$.
\autoref{thm:AdominatesB} says that $A(\Pi)\succeq B(\Pi)$.
Thus, $\tau_{A(\Pi)}$ is a mean-preserving spread of $\tau_{B(\Pi)}$;
to see this directly from the distribution of posteriors, consider the random variable $Z$ such that:
\begin{align*}
& \P{Z=0 \mid \tau_{B(\Pi)}=1/3} = 1/3
& & 
\P{Z=1/2 \mid \tau_{B(\Pi)}=2/3} = 2/3
\\
& \P{Z=1/2 \mid \tau_{B(\Pi)}=1/3} = 2/3
& & 
\P{Z=1 \mid \tau_{B(\Pi)}=2/3} = 1/3.
\end{align*} 
Then, one can check that $Z$ satisfies the conditions of witnessing a mean-preserving spread as in \autoref{thm:blackwell}.
\autoref{fig:posterior-dist-a-b-example} illustrates this example of $B(\Pi)$ and $A(\Pi)$.

\paragraph{Posterior Distribution over Signal Sources: The Binary Case}

The binary model likewise makes it easier to visualize the distribution $\Pi|q$ over signal sources conditional on $A(\Pi)$.  Consider a full support $\Pi$ in the case of binary states and signals under the ``no fake data'' assumption, and recall that a posterior $q$ in this setting can be thought of as lying in $[0,1]$.
For $q>1/2$, the support of $\Pi|q$ consists of those points $(p_{00},p_{11})$ that satisfy $ \frac{p_{11}}{p_{11}+1-p_{00}} = q$;
one can show that this is
the line through $(1,0)$ and $(q,q)$ (not including the end point at $(1,0)$).
\autoref{fig:pi-conditional-on-q} illustrates this. 
Note, however, that $\Pi|q$ is not the distribution of $\Pi$ conditioned on lying in this line segment. Instead, $\Pi|q$ must weight each $\pi$ in this line segment by the probability that this $\pi$ induces posterior $q$ for the source-aware learner; for example, when $q>1/2$, distribution $\Pi|q$ must weight each $(p_{00}, p_{11})$ by $\frac12 (1 - p_{00}) + \frac12 p_{11}$.

\begin{minipage}[c]{1\textwidth}
  \begin{minipage}[c]{0.5\textwidth}
      \centering
      
      \vspace{0.2in}
      
      \includegraphics[width=0.7\textwidth]{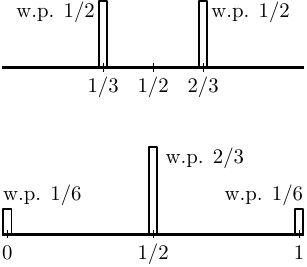}
      \captionof{figure}{Example of distribution of posteriors induced by $B(\Pi)$ and $A(\Pi)$}
      \label{fig:posterior-dist-a-b-example}
  \end{minipage}
  \begin{minipage}[c]{0.5\textwidth}
  \centering
  \begin{tikzpicture}[scale=3]
            \draw[->] (0,0) -- (1.2,0) node[right] {$p_{00}$};
            \draw[->] (0,0) -- (0,1.2) node[above] {$p_{11}$};
        
            \draw[thick] (1,0) -- (1,1);
            \draw[thick] (0,1) -- (1,1);
            \draw[thick] (0,1) -- (1,0);
            
            \fill[gray!30] (0,1) -- (1,0) -- (1,1) -- cycle;
        
            \node[below left] at (0,0) {0};
            \node[below] at (1,0) {1};
            \node[left] at (0,1) {1};

             \draw[thick] (1,0) -- (0.5,1);
         \node[right] at (0.7,0.65) {\textcolor{black}{$\Pi|q$}};
            \draw[thick] (1, 0) circle (0.6pt);

        \end{tikzpicture}   
        \captionof{figure}{$\Pi|q$ in binary signaling schemes}
        \label{fig:pi-conditional-on-q}
  \end{minipage}
\end{minipage}

\section{The Value of Many Sources in Binary Settings}\label{sec:value-of-sources}

\subsection{Over-provisioning Theorem}
\label{sec:over-provisioning}

\autoref{thm:AdominatesB} formalizes the fact that, for a fixed distribution of signaling schemes $\Pi$, source-aware learning, $A(\Pi)$, is more informative than source-blind learning, $B(\Pi)$.
However, our primary interest lies in quantifying \emph{how much} more informative $A(\Pi)$ is than $B(\Pi)$.
To formalize this, we make use of the dominance ratio $P/Q$ of \cite{mu2021blackwell} (recalled in \autoref{sec:def-renyi-dominance-ratio}).
We study $A(\Pi)/B(\Pi)$, representing ``how many times'' $A(\Pi)$ is more informative than $B(\Pi)$ in large samples.
We put an analytical upper bound on the value of $A(\Pi)/B(\Pi)$ in terms of the mean signal $\overline{\pi} = \E{\Pi}$.

In this section, we focus on binary states and the \emph{two-signal, no-fake-data} domain of signaling schemes. 
\citet{mu2021blackwell} restricts attention to binary states for technical reasons and defines dominance ratio in this environment.
Moreover, restricting to the domain $\mathcal{P}_{\urtriangle}$ gives us a natural approach to putting a meaningful upper bound on the value of sources.
In \autoref{sec:value-multiple-signal}, we discuss the challenges to upper bounding the value of sources with finite signal realizations or without the ``no fake data'' assumption. 

Our approach is to use \autoref{thm:triangle-spread-to-posterior-spread} to find the distribution of signaling schemes $\Pi$ on which source-aware learning, when compared to source-blind learning, has the largest dominance ratio. As stated in \autoref{thm:triangle-spread-to-posterior-spread}, $A(\Pi')$ Blackwell dominates $A(\Pi)$ as long as $\Pi'$ is a mean-preserving spread of $\Pi$. Since the domain of signaling schemes $\mathcal{P}_{\urtriangle}$ is convex, the distribution supported only on the vertices of the triangle is a mean-preserving spread to any distribution over $\mathcal{P}_{\urtriangle}$. This is formalized in the following lemma:

\begin{lemma}\label{thm:pi-star-unique}
    For any $\Pi \in \Delta(\mathcal{P}_{\urtriangle})$, there exists a unique $\Pi^*$ with support only on the vertices of $\mathcal{P}_{\urtriangle}$ such that $\Pi^*$ is a mean-preserving spread of $\Pi$.
\end{lemma}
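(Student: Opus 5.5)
Both existence and uniqueness rest on the fact that $\mathcal{P}_{\urtriangle}$ is a triangle, i.e.\ a $2$-simplex, with vertices $v_1=(1,0)$, $v_2=(0,1)$, $v_3=(1,1)$. Every point $\pi=(p_{00},p_{11})\in\mathcal{P}_{\urtriangle}$ therefore has unique barycentric coordinates $\lambda(\pi)=(\lambda_1,\lambda_2,\lambda_3)\in\Delta(\{1,2,3\})$ with $\pi=\sum_i\lambda_i v_i$. Solving the linear system gives
\[
\lambda_1=1-p_{11},\qquad \lambda_2=1-p_{00},\qquad \lambda_3=p_{00}+p_{11}-1 .
\]
These are nonnegative exactly on the no-fake-data domain, sum to one, and are affine in $\pi$.

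\textbf{Existence.} Define the spread function $s(\pi)$ that places mass $\lambda_i(\pi)$ on $v_i$. Then $\E{s(\pi)}=\pi$, so condition (1) of \autoref{def:mps} holds. Set $\Pi^*$ to be the law of $y$ when $z\sim\Pi$ and $y\sim s(z)$. By construction $\Pi^*$ is supported on the vertices and is a mean-preserving spread of $\Pi$. Explicitly, $\Pi^*(v_i)=\E{\lambda_i(\pi)}=\lambda_i(\overline{\pi})$, where $\overline{\pi}=\E{\Pi}$ and the equality uses affineness.

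\textbf{Uniqueness.} Suppose $\Pi'$ is supported on $\{v_1,v_2,v_3\}$ and is a mean-preserving spread of $\Pi$ via some spread function $s'$. By the tower property (averaging condition (1) over $z\sim\Pi$), $\E{\Pi'}=\E{\Pi}=\overline{\pi}$. A distribution on the three vertices with weights $w_i$ has mean $\sum_i w_i v_i$. Affine independence of the vertices then forces $w_i=\lambda_i(\overline{\pi})$. Hence $\Pi'=\Pi^*$.

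The only step that needs care is the observation that any mean-preserving spread preserves the overall mean. This is a one-line iterated-expectation argument, valid because the spread's conditional mean equals $z$ for every $z$ in the support of $\Pi$. Measure-theoretic details are harmless: $s$ is measurable (indeed affine), and $\Pi$ may be any Borel distribution on the compact triangle.
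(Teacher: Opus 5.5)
Your proof is correct and follows essentially the same route as the paper. The paper also spreads each point $(x,y)$ to the vertices $(1,1)$, $(0,1)$, $(1,0)$ with the barycentric weights $x+y-1$, $1-x$, $1-y$. For uniqueness, your argument (any mean-preserving spread preserves the overall mean, and the three vertices are affinely independent) is slightly tighter than the paper's, which only notes that $s(x,y)$ is pointwise unique, though the paper remarks right after the lemma that $\Pi^*$ is pinned down by its mean.
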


\autoref{thm:pi-star-unique} establishes that for any $\Pi$, there is a unique $\Pi^*$ supported on the vertices that is a mean-preserving spread of $\Pi$.
Observe that a $\Pi^*$ is uniquely identified by its mean, which gives the weights on the vertices. 
Therefore, any  distribution $\Pi$ with the same mean signal $\E{\Pi} = \overline{\pi}$ will have the same $\Pi^*$ with  $\E{\Pi^*} = \overline{\pi}$ as their mean-preserving spread. 
With this observation, we provide an upper bound on the dominance ratio of any distribution $\Pi$ in terms of its mean signal $\overline{\pi} = \E{\Pi}$.

\begin{lemma}\label{thm:pi-star-ratio}
    For any $\overline{\pi} \in \mathcal{P}_{\urtriangle}$, let $\Pi^*$ denote the distribution supported on the vertices of $\mathcal{P}_{\urtriangle}$ with $\E{\Pi^*}=\overline{\pi}$.
    Then, for any other distribution $\Pi$ over $\mathcal{P}_{\urtriangle}$, we have 
    $A(\Pi)/B(\Pi)\le A(\Pi^*)/B(\Pi^*)$.
\end{lemma}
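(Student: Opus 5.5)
The plan is to compare numerators and denominators of the two dominance ratios separately. By \autoref{thm:pi-star-unique}, $\Pi^*$ is a mean-preserving spread of $\Pi$. Moreover, since the vertex distribution is pinned down by its mean, $\E{\Pi^*}=\E{\Pi}=\overline{\pi}$. By \autoref{thm:source-blind}, both $B(\Pi)$ and $B(\Pi^*)$ are equivalent to the single scheme $\overline{\pi}$, so $B(\Pi)=B(\Pi^*)$ as signaling schemes. The denominators therefore coincide, and it remains to show that the numerator only grows, i.e.\ that $A(\Pi^*)$ is at least as informative as $A(\Pi)$ in the large-sample sense.

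For this I will apply \autoref{thm:triangle-spread-to-posterior-spread}, which gives $A(\Pi^*)\succeq A(\Pi)$. Blackwell dominance is preserved under independent products: if $g$ garbles $A(\Pi^*)$ into $A(\Pi)$, applying $g$ coordinatewise garbles $A(\Pi^*)^{\otimes n}$ into $A(\Pi)^{\otimes n}$. Hence $A(\Pi^*)^{\otimes n}\succeq A(\Pi)^{\otimes n}$ for every $n$.

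The final step is a monotonicity property of the dominance ratio, which can be checked directly from the definition. Let $P\succeq P'$ and suppose $P'^{\otimes n}\succeq Q^{\otimes m}$. Then $P^{\otimes n}\succeq P'^{\otimes n}\succeq Q^{\otimes m}$ by transitivity of garbling, since garbling functions compose. So the set of ratios $m/n$ admissible for $P/Q$ contains the set admissible for $P'/Q$, and therefore $P'/Q\le P/Q$.

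Taking $P=A(\Pi^*)$, $P'=A(\Pi)$, and $Q=B(\Pi)=B(\Pi^*)$ gives $A(\Pi)/B(\Pi)\le A(\Pi^*)/B(\Pi^*)$.

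As a cross-check, the same conclusion can be reached through \autoref{thm:mu-etal-dominanceratio}. Blackwell dominance implies $R^\theta_{P}(t)\ge R^\theta_{P'}(t)$ for all $t>0$ by the data-processing inequality for Rényi divergences. The infimum of the ratios with a common denominator then only increases.

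The main subtlety is the edge case where $B(\Pi)$ is completely uninformative, so that the ratios may be infinite. The definitional argument above handles this without needing to divide by zero, because it only uses inclusion of admissible sets.
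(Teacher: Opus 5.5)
Your proposal is correct and follows essentially the same route as the paper: you obtain $A(\Pi^*)\succeq A(\Pi)$ from \autoref{thm:pi-star-unique} and \autoref{thm:triangle-spread-to-posterior-spread}, lift it to $n$-fold products, get $B(\Pi)=B(\Pi^*)$ from \autoref{thm:source-blind}, and conclude because the set of admissible ratios $m/n$ for $A(\Pi)/B(\Pi)$ is contained in the set for $A(\Pi^*)/B(\Pi^*)$. Your explicit coordinatewise-garbling argument for the product step and the R\'enyi data-processing cross-check are welcome extra detail, but the argument does not need them.
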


Now we are ready to state our main theorem, which provides an upper bound on the value of sources by analyzing the worst-case ratio obtained by $\Pi^*$.

\begin{theorem}[Over-provisioning]\label{thm:overprovisioning}
    Consider two-signal, no-fake-data signaling schemes. For any distribution of signaling schemes $\Pi \in \Delta(\mathcal{P}_{\urtriangle})$ with the average signal $\E{\Pi} = (x,y)$, $A(\Pi)/B(\Pi)$ is at most $ \frac{\log(2-x-y)}{\log(\sqrt{x(1-y)}+\sqrt{y(1-x)} )}$. 

    Furthermore, if the average signal is $\varepsilon$-away from being completely uninformative, i.e., $x+y \geq 1+\varepsilon$ for some $\varepsilon>0$, then $A(\Pi)/B(\Pi)$ is at most $ \frac{2\log(1-\varepsilon)}{\log (1-\varepsilon^2)} = O(1/\varepsilon)$.
\end{theorem}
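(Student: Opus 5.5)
The plan is to reduce to the extremal vertex distribution and then evaluate the Rényi characterization at a single convenient order. By \autoref{thm:pi-star-ratio}, it suffices to bound $A(\Pi^*)/B(\Pi^*)$, where $\Pi^*$ is the vertex-supported distribution with mean $(x,y)$.

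\textbf{Step 1: identify $\Pi^*$, $A(\Pi^*)$ and $B(\Pi^*)$.} The vertices of $\mathcal{P}_{\urtriangle}$ are $(1,0)$, $(0,1)$ and $(1,1)$.
\begin{itemize}
\item $(1,0)$ always emits signal $0$ and $(0,1)$ always emits signal $1$, so both are completely uninformative.
\item $(1,1)$ reveals the state.
\end{itemize}
Solving $(x,y)=a(1,0)+b(0,1)+c(1,1)$ with $a+b+c=1$ gives $c=x+y-1$, $a=1-y$, $b=1-x$. Hence $A(\Pi^*)$ reveals the state with probability $c$ and is otherwise uninformative. By \autoref{thm:source-blind}, $B(\Pi^*)$ is the single binary scheme $(x,y)$.

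\textbf{Step 2: bound the dominance ratio.} Since \autoref{thm:mu-etal-dominanceratio} expresses $P/Q$ as an infimum over $\theta$ and $t>0$, any single choice of $(\theta,t)$ gives an upper bound. I take $t=1/2$, where
\[
R^\theta(1/2) = -2\log \sum_\omega \sqrt{p_{0\omega}p_{1\omega}}
\]
is symmetric in $\theta$.
\begin{itemize}
\item For $A(\Pi^*)$, the revealing realizations have $p_{0\omega}p_{1\omega}=0$, and the uninformative realizations have total mass $1-c$ under both states. So the Bhattacharyya coefficient is $1-c=2-x-y$.
\item For $B(\Pi^*)=(x,y)$, the coefficient is $\sqrt{x(1-y)}+\sqrt{y(1-x)}$.
\end{itemize}
Taking the ratio of the two divergences gives the first bound; both logarithms are negative, so the ratio is positive. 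Working at $t=1/2$ avoids the orders $t>1$, where $R_{A}$ is infinite because of the perfectly revealing signals.

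\textbf{Step 3: the $\varepsilon$ bound.} Fix $s=x+y=1+\varepsilon'$ with $\varepsilon'\ge\varepsilon$, and maximize the denominator's argument over $x$. Set
\[
f(x)=\sqrt{x(1-s+x)}+\sqrt{(s-x)(1-x)}.
\]
Each term is the geometric mean of two nonnegative affine functions of $x$, so each is concave, and $f$ is concave. Moreover $f$ is symmetric under $x\mapsto s-x$, so it is maximized at $x=s/2$, where $f=\sqrt{s(2-s)}=\sqrt{1-\varepsilon'^2}$. Since $\log f<0$, a larger $f$ gives a smaller $|\log f|$ and hence a larger ratio, so the maximal $f$ is the worst case. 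This yields
\[
A(\Pi)/B(\Pi)\le \frac{2\log(1-\varepsilon')}{\log(1-\varepsilon'^2)}.
\]

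\textbf{Step 4: monotonicity and asymptotics.} Rewrite the bound as $2/(1-h(\varepsilon'))$ with $h(e)=\log(1+e)/(-\log(1-e))$. I need to check that $h$ is decreasing on $(0,1)$, so that the bound is decreasing in $\varepsilon'$ and we may replace $\varepsilon'$ by $\varepsilon$. This follows since the numerator grows slower than linearly and the denominator faster, which I would verify via the power series of both logarithms. Finally, $2\log(1-\varepsilon)\approx-2\varepsilon$ and $\log(1-\varepsilon^2)\approx-\varepsilon^2$, so the bound is $O(1/\varepsilon)$.

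I expect the main care to be needed in Step 2, namely handling infinite likelihood ratios in the Rényi divergence for $A(\Pi^*)$. Choosing $t=1/2$ sidesteps this. The monotonicity in Step 4 is a routine calculus check.
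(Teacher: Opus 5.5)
Your proposal is correct and follows essentially the same route as the paper. You reduce to the vertex distribution $\Pi^*$ via \autoref{thm:pi-star-ratio}, evaluate the R\'enyi characterization at $t=1/2$ to get coefficients $2-x-y$ and $\sqrt{x(1-y)}+\sqrt{y(1-x)}$, maximize over the line $x+y=1+\varepsilon'$ by concavity, and finish with monotonicity in $\varepsilon'$. Your symmetry argument for the maximizer and your concave-numerator/convex-denominator argument for $h$ replace the paper's explicit differentiation in those steps; the proof is otherwise the same.
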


We use numerical simulations to check the approximate tightness of the derived bound. 
Recall in \autoref{thm:mu-etal-dominanceratio}, the dominance ratio is the infiumum of a possibly non-convex function. Therefore, in \autoref{fig:dominance-ratio-simulation}, we numerically approximate the ratio for each $(x,y)$ pair by discretizing the domain into a grid, evaluating the function at those grid points, and taking the minimum value as an approximation of the infimum.
On the other hand, in \autoref{fig:dominance-ratio-analytical}, we directly evaluates the function $ \frac{\log(2-x-y)}{\log(\sqrt{x(1-y)}+\sqrt{y(1-x)} )}$ as in the theorem.
We plot the contour lines of values obtained by both methods. The analytical bound is reasonably close to the simulated infimum, providing informal evidence that this bound approximates the true value.

\vspace{1em}
 \begin{minipage}[c]{0.5\textwidth}
  \centering
\includegraphics[width=1\linewidth]{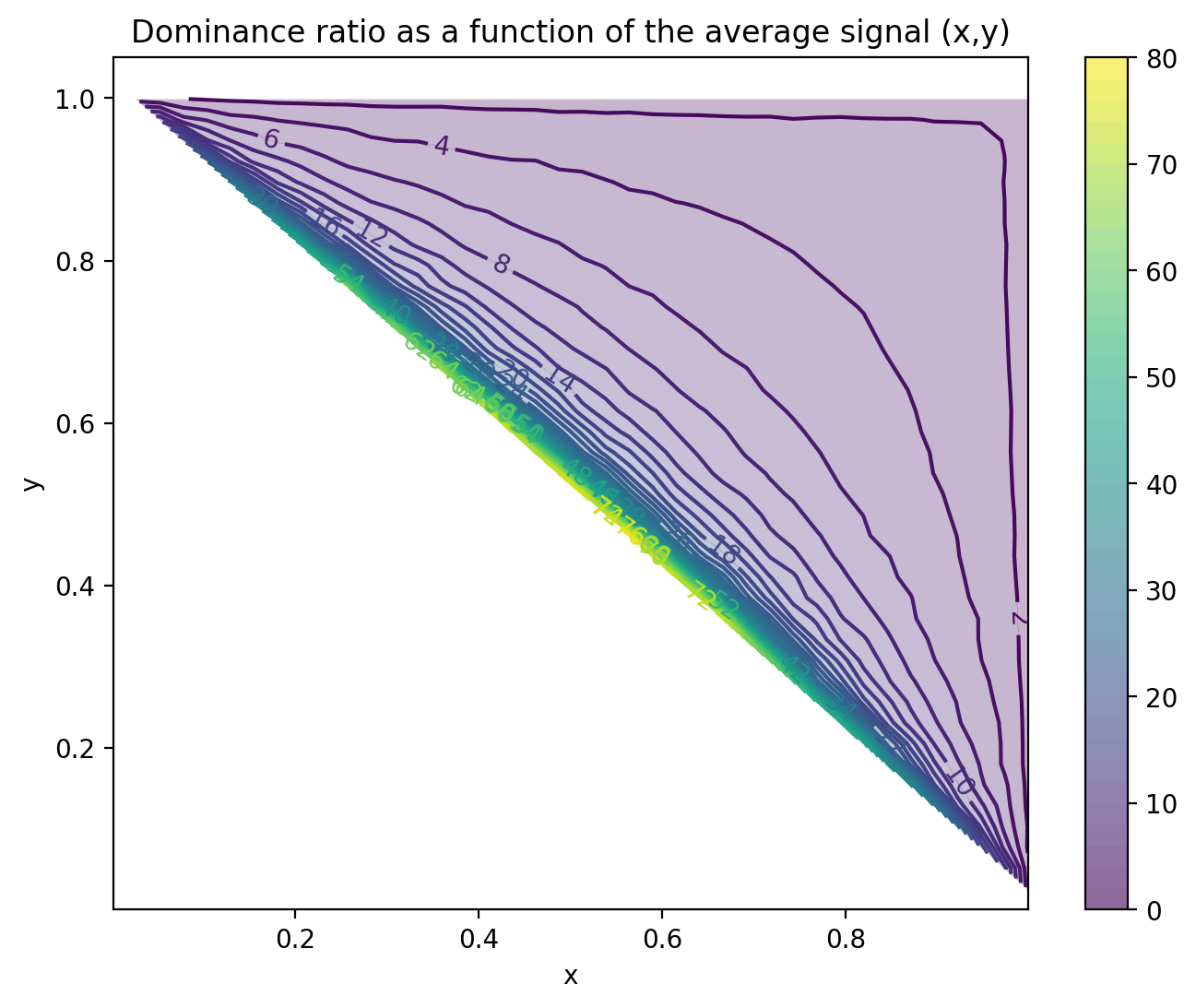}
  \captionof{figure}{Simulated dominance ratio}
  \label{fig:dominance-ratio-simulation}
\end{minipage}%
  \begin{minipage}[l]{0.5\textwidth}
  \centering
\includegraphics[width=1\linewidth]{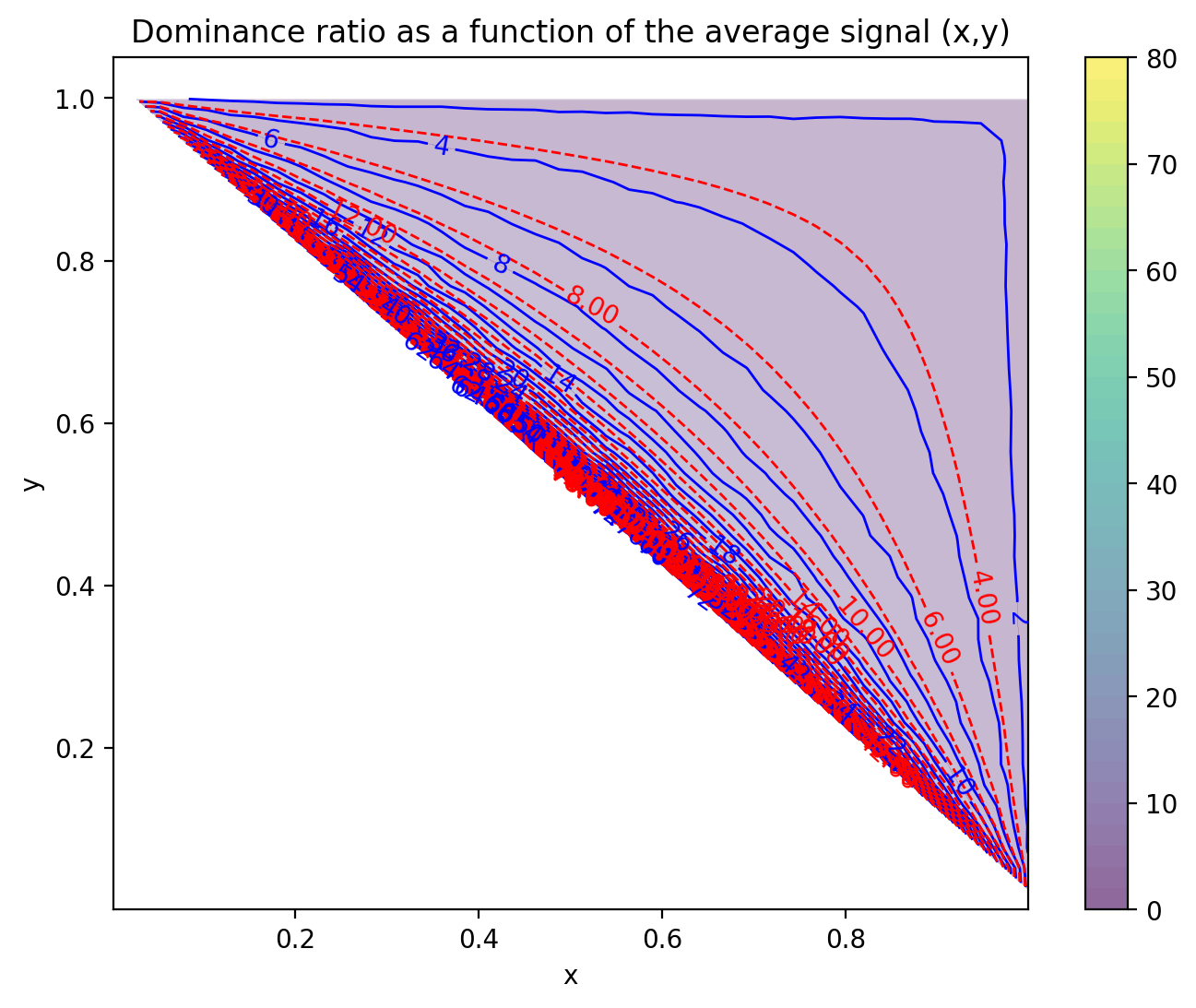}
     \captionof{figure}{Analytical upper bound}
     \label{fig:dominance-ratio-analytical}
\end{minipage}
\vspace{1em}

\begin{remark}
    The average signal quality is directly related to the maximal value one can obtain from knowing the sources. For distributions with an average signal that is relatively away from being uninformative, e.g. $(0.75,0.75)$, the maximal dominance ratio is $8$. On the other hand, for distributions with an average signal of $(0.51,0.51)$, the maximal dominance ratio is $200$. The intuition is that suppose an user only cares about the review made by one other person in the world, then it makes more sense to look for the source as opposed to read summarizations of more data. This corresponds to the situation where the average review is close to being uninformative to this user. 
\end{remark}

\section{Conclusion}\label{sec:conclusion}

In today’s information-rich environment, economic decision-making increasingly relies on aggregating large volumes of data from diverse sources. Whether evaluating product reviews, news articles, or medical recommendations, understanding who generates the information is crucial to judging its reliability and relevance. Yet, modern technologies that facilitate rapid data aggregation often obscure the origins of each data point, raising important questions about the trade-offs involved.

To address these concerns, we develop a formal framework that contrasts source-aware with source-blind learning. We extend the classic Blackwell model of learning by repeatedly drawing from a distribution of signaling schemes and either reveal (source-aware) or conceal (source-blind) the realized signaling scheme to the learner. 

Our analysis reveals source-aware learners benefit from increased heterogeneity among data sources whereas source-blind learning is mean-based. Focusing on a tractable binary model, our main result demonstrates that if the average signaling scheme is at least an $\varepsilon$-distance from being completely uninformative, then a source-blind learner requires at most 
$O(1/\varepsilon)$ times as much data to match the informativeness of a source-aware learner, as measured by Blackwell dominance in large samples. 

Our findings invite further exploration into more complex settings beyond the binary framework, as well as empirical investigations into real-world applications where the trade-off between data volume and source information is paramount.

\clearpage

\bibliographystyle{abbrvnat}
\bibliography{Bib}{}

\clearpage

\appendix

\section{Challenges in Extending to General Signaling Schemes}\label{sec:value-multiple-signal}

In \autoref{sec:binary}, our restriction to the \emph{two-signal, no-fake-data} signaling schemes provided a particularly natural path forward. 
Under both of these conditions, there is a unique mean-preserving spread of any arbitrary distribution $\Pi \in \Delta(\mathcal{P}_{\urtriangle})$ to the vertices of the space $\mathcal{P}_{\urtriangle}$.
This property let us define an $\Pi^*$ as the unique mean-preserving spread of $\Pi$ that is supported on the vertices. The R\'enyi divergences of $A(\Pi^*)$ and $B(\Pi^*)$ have good properties and can be upper bounded by a simple function form.

However, when we drop either one of the assumption, i.e., if we have signaling schemes with more than two realizations, or if the domain of binary signaling schemes has no restriction,
our approach runs into a problem: 
we lose the uniqueness of the mean-preserving spread supported on the vertices.
Hence, it is unclear how to approach getting a natural upper bound as \autoref{sec:over-provisioning} above.
We illustrate the challenge in a few examples.

First, consider the space of binary signaling schemes without the ``no-fake data'' assumption.
Observe that in this domain, there are distributions $\Pi$ such that $A(\Pi)$ is informative, but $B(\Pi)$ is completely uninformative.
For example, when $\Pi$ is a uniform distribution over the bottom left and top right corners $(0,0)$ and $(1,1)$, then $A(\Pi)$ always reveals the state, but $B(\Pi)$ is equivalent to $(1/2,1/2)$ and is completely uninformative. 
In particular, the dominance ratio $A(\Pi)/B(\Pi)$ is unbounded.

Moreover, even when we restrict attention to when the mean signal $\overline{\pi} = \E{\Pi}$ is bounded away from uninformativeness, 
there is no longer a unique mean-preserving spread from $\Pi$ to the vertices of the domain.
For example, the point $(3/4, 3/4)$ could be spread to a distribution $\Pi^*_a$ supported on vertices of $\mathcal{P}_{\urtriangle}$, as in \autoref{sec:over-provisioning}, or it could be spread to the distribution $\Pi^*_b$ that puts $1/4$ weight on $(0,0)$ and $3/4$ weight on $(1,1)$; the latter distribution $\Pi_b^*$ would (similar to the previous paragraph) have infinite dominance ratio and thus give no useful upper bound on the dominance ratio of $\Pi$.

Analogous problems arise when considering our problem for a larger number of states or signals.
To illustrate, consider the space of signaling schemes with two states and three signal realizations.
Since $|\Omega| = 3$ and $\mathcal{P} = \Delta(\Omega) \times \Delta(\Omega)$ is naturally viewed as a subspace of $\R^6$ (with dimension $4$). There are $9$ vertices of this polytope, specifically: 

\[
\begin{array}{ccc}
\pi_1 = \begin{bmatrix} 1 & 0 & 0 \\ 1 & 0 & 0 \end{bmatrix} &
\pi_2 = \begin{bmatrix} 0 & 1 & 0 \\ 1 & 0 & 0 \end{bmatrix} &
\pi_3 = \begin{bmatrix} 0 & 0 & 1 \\ 1 & 0 & 0 \end{bmatrix} \\[10pt]
\pi_4 = \begin{bmatrix} 1 & 0 & 0 \\ 0 & 1 & 0 \end{bmatrix} &
\pi_5 = \begin{bmatrix} 0 & 1 & 0 \\ 0 & 1 & 0 \end{bmatrix} &
\pi_6 = \begin{bmatrix} 0 & 0 & 1 \\ 0 & 1 & 0 \end{bmatrix} \\[10pt]
\pi_7 = \begin{bmatrix} 1 & 0 & 0 \\ 0 & 0 & 1 \end{bmatrix} &
\pi_8 = \begin{bmatrix} 0 & 1 & 0 \\ 0 & 0 & 1 \end{bmatrix} &
\pi_9 = \begin{bmatrix} 0 & 0 & 1 \\ 0 & 0 & 1 \end{bmatrix}
\end{array}
\]

Each vertices represents a signaling scheme where the columns denotes the space of signal realizations, the first row denotes $P_0 \in \Delta(\Omega)$, and the second row denotes $P_1 \in \Delta(\Omega)$. These vertices are either perfectly uninformative for all realizations (when $P_0 = P_1$), or perfectly informative (when $P_0\ne P_1$).
Consider a completely uninformative signaling scheme
\[\overline{\pi} = \begin{bmatrix} 1/3 & 1/3 & 1/3 \\ 1/3 & 1/3 & 1/3\end{bmatrix}. \] 
This signaling scheme can be written as $\frac13 \pi_1 + \frac13 \pi_5 + \frac13 \pi_9$, which are three uninformative signaling schemes. However, it can also be written as a $\frac19$ equal weighting of \emph{all} the vertices, which contains some informative signals.
More broadly, beyond two signals and states, it is unclear how to impose any sort of ``no-fake-data'' restriction which makes the above representations unique, as the signal realizations need not have any interpretation with regard to the states.

The above discussion shows that, without any restrictions on the space of signaling schemes, 
it is unclear how to get useful bounds on the dominance ratio due to non-uniqueness and possibly unbounded ratios. We leave further investigations in the space of general signaling schemes to future work.

\section{Omitted Proofs}\label{sec:omitted-proofs}

\subsection{Proof of \autoref{thm:source-blind}}
\begin{proof}
Observe that the distribution of signal realizations under $B(\Pi)$ exactly equals the distribution of signal realizations under $\overline{\pi}$ for each state.
Formally, in state $\theta$, signal realization $\omega$ occurs under $B(\Pi)$ with probability exactly
$ \Es{\pi\sim\Pi}{ p^\pi_{\theta\omega} }
= p^{\overline{\pi}}_{\theta\omega}$.
\end{proof}

\subsection{Proof of \autoref{thm:AdominatesB}}
\begin{proof}
    Recall that $A(\Pi)$ is a signaling scheme over $\mathcal P\times\Omega$, and $B(\Pi)$ is a signaling scheme over $\Omega$.
    Consider the function $\gamma: \mathcal{P} \times \Omega \to \Delta(\Omega)$ such that $\gamma(\pi,\omega) = \omega$.
    It holds directly by the definitions that $\gamma$ is a garbling from $A(\Pi)$ to $B(\Pi)$, as defined in \autoref{thm:blackwell}, and hence
    $A(\Pi)$ Blackwell dominates $B(\Pi)$.
\end{proof}

\subsection{Proof of \autoref{thm:triangle-spread-to-posterior-spread}}
\begin{proof}
Let $\Pi_s$ be a mean preserving spread of $\Pi$, and let $s : \R^m \to \Delta(\R^m)$ be the spread function in the space of signaling schemes as in \autoref{def:mps}.
Recall that this means that (1) for all $\pi\in\supp(\Pi)$, we have $\E{s(\pi)}=\pi$, and (2) if we sample $\pi\sim\Pi$ and then $\pi_s\sim s(\pi)$, then $\pi_s$ is equal in distribution to $\Pi_s$. 
By the equivalences in \autoref{thm:blackwell}, it suffices to show that  $\tau_{A(\Pi_s)}$ is \emph{also} a mean-preserving spread of $\tau_{A(\Pi)}$, in the space of posteriors $S_k$ (where recall that $S_k$ denotes the $k$-simplex).
We use $s(\cdot)$ to construct a spread function in the posterior space
$t : S_k \to \Delta( S_k )$ 
from $\tau_{A(\Pi)}$ to $\tau_{A(\Pi_s)}$.

Fix a certain $\pi$, consider $s(\pi)$, 
which is itself a distribution over signaling schemes.
Observe that for all $\pi$ in the support of $\Pi$, we have that $A(s(\pi))$ Blackwell dominates $\pi$, by the same logic we used to establish that $A(\Pi)$ always Blackwell dominates $B(\Pi)$ in \autoref{thm:AdominatesB}.\footnote{
  The crucial reliance of this step on \autoref{thm:source-blind} illustrates why mean-preserving spread over $\mathcal{P}$ are useful. 
  It also shows that our parametrization of the space of signaling schemes $\mathcal{P}$, and its notion of averages, may be the ``correct'' way to represent learning from a mixture of information sources.
}
By the equivalences in \autoref{thm:blackwell}, $\tau_{A(s(\pi))}$ is a mean-preserving spread of $\tau_{\pi}$. 
Thus, there is a spread function 
$t_\pi : S_k \to \Delta(S_k)$ 
from  $\tau_{\pi}$ to $\tau_{A(s(\pi))}$.
Now, for any posterior $q \in \supp(\tau_{A(\Pi)}) \subseteq S_k$, define $t(q)$ as follows: First, draw $\pi$ from $\Pi|q$; Second, draw $q_s$ from $t_\pi(q)$; Finally $t(q)$ is the distribution of $q_s$ that arises from the above process.\footnote{
  Note that when $\pi$ is drawn from $\Pi|q$, posterior $\tau_{\pi}$ must include $q$ in the support and $t_{\pi}$ must include $q$ in its domain. That's why we can always evaluate $t_{\pi}(q)$.
}
We will show that $t(\cdot)$ is a valid mean-preserving spread by showing: (1) it is mean-preserving at each point; and (2) the distribution after the spread equals the distribution $\tau_{A(\Pi_s)}$. 

(1) To see that $\E{ t(q) } = q$ for each $q$, we use the law of iterated expectation, and observe that for all $\pi$ in the support of $\Pi|q$, we know $t_\pi(q)$ is (by the definition of a spread function) a distribution with mean $q$. Therefore we can write:
\begin{equation*}
  \E{ t(q) }
  = \Es{\substack{q_s\sim t_{\pi}(q) \\ \pi \sim \Pi\mid q}}{q_s} 
  = \Es{\pi\sim \Pi\mid q}{ \Es{q_s\sim t_{\pi}(q)}{q_s | \pi}} 
  = \Es{\pi\sim \Pi\mid q}{q} 
  = q
\end{equation*}
(2) Now we must show that if $q_s$ is distributed by drawing $q$ from $\tau_{A(\Pi)}$ and then $q_s$ from $t(q)$, then $q_s$ is equal in distribution to $\tau_{A(\Pi_s)}$.
To see this, observe that by the definition of $t(q)$, this is equivalent to: 
First, draw $q$ from  $\tau_{A(\Pi)}$; Second, draw $\pi$ from $\Pi|q$; Third, draw $q_s$ from $t_\pi(q)$.
In this process, $\pi$ is distributed according to $\Pi$. 
Additionally, since $t_\pi$ is a spread function from $\tau_\pi$ to $\tau_{A(s(\pi))}$,
we know that for any fixed $\pi$, the distribution of $q_s$ conditioned on this $\pi$ equals the distribution of $\tau_{A(s(\pi))}$.
Since $s$ is a spread function from $\Pi$ to $\Pi_s$ as in \autoref{def:mps}, 
this means that if we draw $\pi\sim\Pi$ and then $\pi_s\sim s(\pi)$, then $\pi_s$ is equal in distribution to $\Pi_s$.
Therefore, the overall distribution of $q_s$ follows $\tau_{A(\Pi_s)}$, as desired.
This finishes the proof.
\end{proof}

\subsection{Proof of \autoref{thm:pi-star-unique}}
\begin{proof}
  Consider a spread function $s$ such that, for each point $(x,y)\in \mathcal{P}_{\urtriangle} \subseteq \R^2$, 
  the distribution $s(x,y)$ is supported only on the vertices $(1,1)$, $(0,1)$ and $(1,0)$.
  For any $(x,y)$, it is straightforward to see that there exists a unique distribution $s(x,y)$ on these vertices such that $\E{s(x,y)}=(x,y)$.
In particular, $s(x,y)$ is the distribution over $(1,1)$, $(0,1)$ and $(1,0)$ with probability $(x+y-1)$, $(1-x)$, and $(1-y)$, respectively. 
  Hence, distribution $\Pi^*$ arising from applying $s(\cdot)$ is unique as well. 
\end{proof}

\subsection{Proof of \autoref{thm:pi-star-ratio}}
\begin{proof}
    Recall the definition of the dominance ratio of $A(\Pi)$ over $B(\Pi)$:
    \begin{equation*}
        A(\Pi)/B(\Pi) = \sup \left\{\frac{m}{n}: A(\Pi)^{\otimes n} \succeq  B(\Pi)^{\otimes m} \right\} 
    \end{equation*}
  Since \autoref{thm:pi-star-unique} establishes that $\Pi^*$ is the unique mean-preserving spread of any $\Pi$ with the same mean, \autoref{thm:triangle-spread-to-posterior-spread} implies that $A(\Pi^*)$ Blackwell dominates $A(\Pi)$. This clearly implies that $A(\Pi^*)^{\otimes n}$ Blackwell dominates $A(\Pi)^{\otimes n}$ when repeatedly drawing from the same experiment $n$ times. Furthermore, we know $B(\Pi) = B(\Pi^*)$ from \autoref{thm:source-blind} since the distributions have the same mean. Now, for any $m,n$ such that $A(\Pi)^{\otimes n} \succeq B(\Pi)^{\otimes m}$ holds, we have that 
  \[A(\Pi^*)^{\otimes n} \succeq A(\Pi)^{\otimes n} \succeq B(\Pi)^{\otimes m} = B(\Pi^*)^{\otimes m}\]
  i.e., $A(\Pi^*)^{\otimes n} \succeq B(\Pi^*)^{\otimes m}$ also holds. We thus have that $A(\Pi^*)/B(\Pi^*) \geq A(\Pi)/B(\Pi)$ by definition.
\end{proof}

\subsection{Proof of \autoref{thm:overprovisioning}}
\begin{proof}
We use \autoref{thm:pi-star-ratio} and consider the dominance ratio for distributions supported on the vertices $\Pi^*$ as an upper bound. We also use the fact that the infimum of the R\'enyi ratio over $t>0$ is weakly smaller than the ratio evaluated at any $t$. Specifically, we use $t = \frac12$ as it simplifies the expression while providing a reasonably tight upper bound.

\[\frac{A(\Pi)}{B(\Pi)} \leq \frac{A(\Pi^*)}{B(\Pi^*)} = \inf_{\theta\in\{0,1\}, t>0} \frac{R_{A(\Pi^*)}^{\theta}(t)}{R_{B(\Pi^*)}^{\theta}(t)} \leq \frac{R_{A(\Pi^*)}^{\theta}(\frac12)}{R_{B(\Pi^*)}^{\theta}(\frac12)}\]

By \autoref{thm:source-blind}, $B(\Pi^*)$ is equivalent to the (binary) mean signal $(x,y)$. In binary signaling scheme, the R\'enyi divergences simplifies to:
\begin{align*}
    R_P^{\theta}(t) = \frac{1}{t-1} \log \left[\sum_{\omega=0}^{1}\left(\frac{p_{\theta\omega}}{p_{(1-\theta)\omega}}\right)^{t-1}\cdot p_{\theta\omega} \right] 
\end{align*}
When $t = \frac{1}{2}$, it further simplies to:
\begin{align*}
    R_P^{\theta}(1/2) &= -2 \log \left[\sum_{\omega=0}^{1} \sqrt{p_{\theta\omega} p_{(1-\theta)\omega}} \right] 
\end{align*}
Substituting $p_{00} = x$ and $p_{11} = y$:
\begin{align*}
    R_{(x,y )}^{\theta=0}(1/2) = R_{(x,y )}^{\theta=1}(1/2) = -2 \log \left[\sqrt{x(1-y)} + \sqrt{(1-x)y}\right]
\end{align*}

For $A(\Pi^*)$, on the other hand, the enriched signal space of $\mathcal{P}_{\urtriangle} \times \Omega$ is supported on
$\bigl\{ (1,0),\allowbreak (0,1),\allowbreak (1,1) \bigr\}\times \{0,1\}.$
The probability measure over the enriched signal space under state $\theta$ is 
\begin{align*}
    P_{\theta}(p,q,\omega) = f_{\Pi^*}(p,q)\cdot p_{\theta\omega}
\end{align*}
for $(p,q) \in \bigl\{ (1,0),\allowbreak (0,1),\allowbreak (1,1) \bigr\}$ and $\theta, \omega \in \{0,1\}$, where $f_{\Pi^*}$ is the probability mass function supported on the vertices $(1,0)$, $(0,1)$, and $(1,1)$.
The R\'enyi divergence for $A(\Pi^*)$ is calculated as follows, where we sum over  $(p,q) \in \bigl\{ (1,0),\allowbreak (0,1),\allowbreak (1,1) \bigr\}$.
\begin{align*}
    R_{A(\Pi^*)}^{\theta=0}(t)
    &= \frac{1}{t-1} \log \left[\sum_{p,q} \left(\frac{p^t}{(1-q)^{t-1}} + \frac{(1-p)^t}{q^{t-1}}\right) f_{\Pi^*}(p,q)\right] \\
      R_{A(\Pi^*)}^{\theta=1}(t)
    &= \frac{1}{t-1} \log \left[\sum_{p,q} \left(\frac{(1-q)^t}{p^{t-1}} + \frac{q^t}{(1-p)^{t-1}}\right) f_{\Pi^*}(p,q)\right] 
\end{align*}
When $t=\frac12$, this simplifies to:
\begin{align*}
    R_{A(\Pi^*)}^{\theta=0}(1/2) = R_{A(\Pi^*)}^{\theta=1}(1/2)
    &= -2 \log \left[\sum_{p,q}  \left(\sqrt{p(1-q)} + \sqrt{(1-p)q}\right) f_{\Pi^*}(p,q)\right]
\end{align*}

When $\Pi^*$ has mean $(x,y)$, the weight on the supporting signaling schemes for $A(\Pi^*)$ are: $f_{\Pi^*}(1,1) = x+y-1$, $f_{\Pi^*}(0,1) = 1-x$, and $f_{\Pi^*}(1,0) = 1-y$. Substituting to the expression, we obtain:
\[R^{\theta}_{A(\Pi^*)}(1/2) = -2 \log \left[
\sqrt{0}f_{\Pi}(1,1) + \sqrt{1}f_{\Pi}(1,0) + \sqrt{1}f_{\Pi}(0,1)\right] = -2 \log(2-x-y)\]
Therefore, for any signal distribution $\Pi$ with mean signal $(x,y)\in \mathcal{P}_{\urtriangle}$, $A(\Pi)/B(\Pi)$ can be upper bounded by the ratio of $\Pi^*$ with the same mean:
\begin{equation}\label{eq:xy-bound}
    \frac{A(\Pi)}{B(\Pi)} 
\leq 
\frac{R_{A(\Pi^*)}^{\theta}(1/2)}{R_{B(\Pi^*)}^{\theta}(1/2)}
= 
\frac{\log(2-x-y)}{\log(\sqrt{x(1-y)}+\sqrt{y(1-x)} )}
\end{equation}

Furthermore, if we put restrictions on the mean of $\Pi$, e.g., requiring that $x+y\geq 1+\varepsilon$, we can get more interpretable bounds in terms of $\varepsilon$.

First, let us consider for the $(x,y)$ pairs on the line $x+y = 1+\varepsilon'$ for any $\varepsilon'$. We want to find which $(x,y)$ pair on the line attains the highest dominance ratio. 
Substituting $y= 1+\varepsilon'-x$ to \autoref{eq:xy-bound}, we obtain
\[\frac{A(\Pi)}{B(\Pi)} 
\leq  \frac{\log(1-\varepsilon')}{\log(\sqrt{x(x-\varepsilon')}+\sqrt{(1-x+\varepsilon')(1-x)} )}\]

To bound the right hand side, we show that when $\varepsilon'\in(0,1)$ and $x\in [\epsilon,1]$, the logarithm 
attains its maximum at $x=\frac{1+\varepsilon'}{2}$.
We start by showing that the argument inside the logarithm is log-concave in $x$.
Since every concave function that is nonnegative on its domain is log-concave, we need to verify the concavity of argument. The second order derivative of $\sqrt{x(x-\varepsilon')}$ is $\frac{-(\varepsilon')^2}{4(x^2-x\varepsilon')^{3/2}} < 0$ when $\varepsilon'\in(0,1)$ and $x\in [\varepsilon',1]$. By a symmetric argument, the second order derivative of $\sqrt{(1-x+\varepsilon')(1-x)}$ is also strictly negative in the domain. The sum of two strictly concave functions is strictly concave. Therefore, the denominator is strictly concave in $x$ when $x\in [\epsilon,1]$ and attains the maximum at $x = \frac{1+\varepsilon'}{2}$. 
Since the numerator $\log(1-\varepsilon') <0$, the fraction also attains maximum at $x=\frac{1+\varepsilon'}{2}$. This means that the dominance ratio for $(x,y)$ along the line $x+y=1+\varepsilon'$ is upper bounded by the dominance ratio at $x=y=\frac{1+\varepsilon'}{2}$.

Substituting in $x=y=\frac{1+\varepsilon'}{2}$ to \autoref{eq:xy-bound} we obtain an upper bound in terms of $\varepsilon'$:

\begin{equation}\label{eq:epsilon-prime-bound}
    \frac{A(\Pi)}{B(\Pi)} 
\leq 
\frac{R_{A(\Pi^*)}^{\theta}(1/2)}{R_{B(\Pi^*)}^{\theta}(1/2)}
= 
\frac{2\log(1-\varepsilon')}{\log(1-(\varepsilon')^2)}
\end{equation}
We can rewrite the function as $\frac{2}{1+g(\varepsilon')}$ where $g(\varepsilon') = \frac{\log(1+\varepsilon')}{\log(1-\varepsilon')}$. By quotient rule, the numerator of the first-order derivative of $g$ is $\frac{\log(1-\varepsilon')}{1+\varepsilon'} + \frac{\log(1+\varepsilon')}{1-\varepsilon'}$, which is strictly positive when $\varepsilon'\in(0,1)$. Since $\frac{2}{1+g(\varepsilon')}$ is decreasing in $g(\varepsilon')$, we have proved that \autoref{eq:epsilon-prime-bound} is decreasing in $\varepsilon'$ when $\varepsilon'\in(0,1)$. 

Therefore, among the $(x,y)$ pairs such that $x+y \geq 1+\varepsilon$, those with $x+y = 1+\varepsilon$ attains the largest upper bound. Replacing $\varepsilon'$ with $\varepsilon$, we have that for any distribution of signaling schemes $\Pi$ with mean signal $(x,y)$ such that $x+y \geq 1+\varepsilon$, 
\begin{equation}\label{eq:epsilon-bound}
    \frac{A(\Pi)}{B(\Pi)} 
\leq 
\frac{2\log(1-\varepsilon)}{\log(1-\varepsilon^2)} \approx \frac{2}{\varepsilon}
\end{equation}

Using the approximation $\log(1+x) \approx x$ when the absolute value of $x$ is small, this fraction is approximately $\frac{2}{\varepsilon}$ when $\varepsilon$ is close to zero.
\end{proof}

\end{document}